\documentclass{LMCS}

\def\doi{8 (2:12) 2012}
\lmcsheading%
{\doi}
{1--27}
{}
{}
{Jan.~\phantom07, 2011}
{Jun.~19, 2012}
{}

\usepackage{enumerate}
\usepackage{hyperref}
\usepackage{amsmath}
\usepackage{verbatim}
\usepackage{neil}
\usepackage{xypic}
\usepackage[all,2cell]{xy} \UseAllTwocells
\usepackage{amssymb}

\newcommand{\A}{\mathcal{A}}
\newcommand{\E}{\mathcal{E}}
\newcommand{\B}{\mathcal{B}}
\newcommand{\Set}{\mbox{{\sf Set}}}
\newcommand{\Nat}{\mathit{Nat}}
\newcommand{\Alge}[1]{\mathit{Alg}_{#1}}
\newcommand{\hash}{\#}

\newcommand{\adjunction}[4]{\xymatrix{ {#1} \ar@/^/[r]^{#3}
    \ar@{}[r]|{\top}& {#2} \ar@/^/[l]^{#4}}}

\newcommand\dia[1]{\vcenter{\xymatrix{#1}}}

\begin{document}

\title[Generic Fibrational Induction]{Generic Fibrational Induction\rsuper*}

\author[N.~Ghani]{Neil Ghani}	
\author[P.~Johann]{Patricia Johann}	
\author[C.~Fumex]{Cl\'ement Fumex}	
\address{University of Strathclyde, Glasgow G1 1XH, UK}	
\email{\{Neil.Ghani, Patricia.Johann, Clement.Fumex\}@cis.strath.ac.uk} 
\thanks{This research is partially supported by EPSRC grant EP/C0608917/1.}

\keywords{Induction, algebraic semantics of data types, fibrations,
  category theory} 
\subjclass{F.3.2, D.3.1} 
\titlecomment{{\lsuper*}This paper
  is revised and significantly expanded version of~\cite{gjf10}.}


\begin{abstract}
  \noindent This paper provides an induction rule that can be used to
  prove properties of data structures whose types are inductive, i.e.,
  are carriers of initial algebras of functors. Our results are
  semantic in nature and are inspired by Hermida and Jacobs' elegant
  algebraic formulation of induction for polynomial data types.  Our
  contribution is to derive, under slightly different assumptions, a
  sound induction rule that is generic over {\em all} inductive types,
  polynomial or not. Our induction rule is generic over the kinds of
  properties to be proved as well: like Hermida and Jacobs, we work in
  a general fibrational setting and so can accommodate very general
  notions of properties on inductive types rather than just those of a
  particular syntactic form. We establish the soundness of our generic
  induction rule by reducing induction to iteration. We then show how
  our generic induction rule can be instantiated to give induction
  rules for the data types of rose trees, finite hereditary sets, and
  hyperfunctions.  The first of these lies outside the scope of
  Hermida and Jacobs' work because it is not polynomial, and as far as
  we are aware, no induction rules have been known to exist for the
  second and third in a general fibrational framework. Our
  instantiation for hyperfunctions underscores the value of working in
  the general fibrational setting since this data type cannot be
  interpreted as a set.
\end{abstract}

\maketitle

\section{Introduction}
Iteration operators provide a uniform way to express common and
naturally occurring patterns of recursion over inductive data types.
Expressing recursion via iteration operators makes code easier to
read, write, and understand; facilitates code reuse; guarantees
properties of programs such as totality and termination; and supports
optimising program transformations such as $\mathit{fold}$ fusion and
short cut fusion. Categorically, iteration operators arise from
initial algebra semantics of data types, in which data types are
regarded as carriers of initial algebras of functors. Lambek's Lemma
ensures that the carrier of the initial algebra of $F$ is its least
fixed point $\mu F$, and initiality ensures that, given any
$F$-algebra $h:F A \rightarrow A$, there is a unique $F$-algebra
homomorphism, denoted $fold\, h$, from the initial algebra
$\mathit{in} : F(\mu F) \ra \mu F$ to that algebra.  For each functor
$F$, the map $\mathit{fold} : (F A \ra A) \ra \mu F \ra A$ is the
iteration operator for the data type $\mu F$.  Initial algebra
semantics thus provides a well-developed theory of
iteration which is... \medskip
\begin{iteMize}{$\bullet$}
\item {\em ...principled}, in that it is derived solely from the
  initial algebra semantics of data types. This is important because
  it helps ensure that programs have rigorous mathematical foundations
  that can be used to ascertain their meaning and correctness. \medskip
\item {\em ...expressive}, and so is applicable to all inductive types
  --- i.e., to {\em every} type which is the carrier of an initial
  algebra of a functor --- rather than just to syntactically defined
  classes of data types such as polynomial data types. \medskip
\item {\em ...correct}, and so is valid in any model ---
  set-theoretic, domain-theoretic, realisability, etc. --- in which
  data types are interpreted as carriers of initial algebras. \medskip
\end{iteMize}

\noindent Because induction and iteration are closely linked --- induction is
often used to prove properties of functions defined by iteration, and
the correctness of induction rules is often established by reducing it
to that of iteration --- we may reasonably expect that initial algebra
semantics can be used to derive a principled, expressive, and correct
theory of induction for data types as well. In most treatments of
induction, given a functor $F$ together with a property $P$ to be
proved about data of type $\mu F$, the premises of the induction rule
for $\mu F$ constitute an $F$-algebra with carrier $\Sigma x: \mu F.\,
P x$. The conclusion of the rule is obtained by supplying such an
$F$-algebra as input to the iteration operator for $\mu F$. This
yields a function from $\mu F$ to $\Sigma x: \mu F.\, P x$ from which
a function of type $\forall x:\mu F. \,P x$ can be obtained. It has
not, however, been possible to characterise $F$-algebras with carrier
$\Sigma x:\mu F. \,P x$ without additional assumptions on
$F$. Induction rules are thus typically derived under the assumption
that the functors involved have a certain structure, e.g., that they
are polynomial. Moreover, taking the carriers of the algebras to be
$\Sigma$-types assumes that properties are represented as type-valued
functions. So while induction rules derived as described above are
both principled and correct, their expressiveness is limited along two
dimensions: with respect to the data types for which they can be
derived and the nature of the properties they can verify.

A more expressive, yet still principled and correct, approach to
induction is given by Hermida and Jacobs~\cite{hj98}. They show how to
lift each functor $F$ on a base category of types to a functor
$\hat{F}$ on a category of properties over those types, and take the
premises of the induction rule for the type $\mu F$ to be an
$\hat{F}$-algebra. Hermida and Jacobs work in a fibrational setting
and the notion of property they consider is, accordingly, very
general. Indeed, they accommodate any notion of property that can be
suitably fibred over the category of types, and so overcome one of the
two limitations mentioned above. On the other hand, their approach
gives sound induction rules only for polynomial data types, so the
limitation on the class of data types treated remains in their work.

This paper shows how to remove the restriction on the class of data
types treated.  {\em Our main result is a derivation of a sound
  generic induction rule that can be instantiated to {\em every}
  inductive type}, regardless of whether it is polynomial or not. We
think this is important because it provides a counterpart for
induction to the existence of an iteration operator for every
inductive type.  We take Hermida and Jacobs' approach as our point of
departure and show that, under slightly different assumptions on the
fibration involved, we can lift {\em any} functor on the base category
of a fibration to a functor on the total category of the
fibration. The lifting we define forms the basis of our generic
induction rule.

The derivation of a generic, sound induction rule covering all
inductive types is clearly an important theoretical result, but it
also has practical consequences: \smallskip
\begin{iteMize}{$\bullet$}
\item We show in Example~\ref{ex:rose} how our generic induction rule
  can be instantiated to the families fibration over $\Set$ (the
  fibration most often implicitly used by type theorists and those
  constructing inductive proofs with theorem provers) to derive the
  induction rule for rose trees that one would intuitively expect. The
  data type of rose trees lies outside the scope of Hermida and
  Jacobs' results because it is not polynomial. On the other hand, an
  induction rule for rose trees is available in the proof assistant
  Coq, although it is neither the one we intuitively expect nor
  expressive enough to prove properties that ought to be amenable to
  inductive proof. Indeed, if we define rose trees in Coq by\medskip

\begin{verbatim}
  Node : list rose -> rose\end{verbatim}

\medskip\noindent
  then Coq generates the following induction rule\medskip

  \begin{verbatim}
  rose_ind : forall P : rose -> Prop,
               (forall l : list rose, P (Node l)) ->
               forall r : rose, P r\end{verbatim}
\medskip\noindent
But to prove a property of a rose tree \verb|Node l|, we must prove
that property assuming only that \verb|l| is a list of rose trees, and
without recourse to any induction hypothesis. There is, of course, a
presentation of rose trees by mutual recursion as well, but this
doesn't give the expected induction rule in Coq either. Intuitively,
what we expect is an induction rule whose premise is\medskip

  \begin{verbatim}
  forall [r_0, ..., r_n] : list rose,
       P(r_0) -> ... -> P(r_n) -> P(Node [r_0, ..., r_n])\end{verbatim}
\medskip\noindent
  The rule we derive for rose trees is indeed the expected one, which
  suggests that our derivation may enable automatic generation of more
  useful induction rules in Coq, rather than requiring the user to
  hand code them as is currently
  necessary. \smallskip
\item We further show in Example~\ref{ex:hereditary} how our generic
  induction rule can be instantiated, again to the families fibration
  over $\Set$, to derive a rule for the data type of finite hereditary
  sets. This data type is defined in terms of quotients and so lies
  outside
  most current theories of induction. \smallskip
\item Finally, we show in Example~\ref{ex:hyperfunctions} how our
  generic induction rule can be instantiated to the subobject
  fibration over $\omega CPO_{\bot}$ to derive a rule for the data
  type of hyperfunctions. Because this data type cannot be interpreted
  as a set, a fibration other than the families fibration over $\Set$
  is required; in this case, use of the subobject fibration allows us
  to derive an induction rule for admissible subsets of
  hyperfunctions. The ability to treat the data type of hyperfunctions
  thus underscores the importance of developing our results in the
  general fibrational framework.  Moreover, the functor underlying the
  data type of hyperfunctions is not strictly positive~\cite{gaa05},
  so the ability to treat this data type also underscores the
  advantage of being able to handle a very general class of functors
  going beyond simply polynomial functors. As far as we know,
  induction rules for finite hereditary sets and hyperfunctions have
  not previously existed in
  the general fibrational framework. \smallskip
\end{iteMize}

Although our theory of induction is applicable to {\em all} inductive
functors --- i.e., to {\em all} functors having initial algebras,
including those giving rise to nested types~\cite{mat09},
GADTs~\cite{she04}, indexed containers~\cite{am09}, dependent
types~\cite{nps90}, and inductive recursive types~\cite{dyb00} --- our
examples show that working in the general fibrational setting is
beneficial even if we restrict our attention to strictly positive data
types. We do, however, offer some preliminary thoughts in
Section~\ref{sec:conclusion} on the potentially delicate issue of
instantiating our general theory with fibrations appropriate for
deriving induction rules for specific classes of higher-order functors
of interest.  It is also worth noting that the specialisations of our
generic induction rule to polynomial functors in the families
fibration over $\Set$ coincide exactly with the induction rules of
Hermida and Jacobs.  But the structure we require of fibrations
generally is slightly different from that required by Hermida and
Jacobs, so while our theory is in essence a generalisation of theirs,
the two are, strictly speaking, incomparable. The structure we require
of our fibrations is, nevertheless, certainly present in all standard
fibrational models of type theory (see
Section~\ref{sec:fibrations}). Like Hermida and Jacobs, we prove our
generic induction rule correct by reducing induction to iteration. A
more detailed discussion of when our induction rules coincide with
those of Hermida and Jacobs is given in Section~\ref{sec:fibrations}.

We take a purely categorical approach to induction in this paper, and
derive our generic induction rule from only the initial algebra
semantics of data types. As a result, our work is inherently
extensional. Although translating our constructions into intensional
settings may therefore require additional effort, we expect the
guidance offered by the categorical viewpoint to support the
derivation of induction rules for functors that are not treatable at
present. Since we do not use any form of impredicativity in our
constructions, and instead use only the weaker assumption that initial
algebras exist, this guidance will be widely applicable.

The remainder of this paper is structured as follows. To make our
results as accessible as possible, we illustrate them in
Section~\ref{sec:stateofart} with a categorical derivation of the
familiar induction rule for the natural numbers. In
Section~\ref{sec:essence} we derive an induction rule for the special
case of the families fibration over $\Set$. We also show how this rule
can be instantiated to derive the one from
Section~\ref{sec:stateofart}, and the ones for rose trees and finite
hereditary sets mentioned above. Then, in Section~\ref{sec:fibrations}
we present our generic fibrational induction rule, establish a number
of results about it, and illustrate it with the aforementioned
application to hyperfunctions. The approach taken in this section is
completely different from the corresponding one in the conference
version of the paper~\cite{gjf10}, and allows us to improve upon and
extend our previous results. Section~\ref{sec:conclusion} concludes,
discusses possible instantiations of our generic induction rule for
higher-order functors, and offers some additional directions for
future research.

When convenient, we identify isomorphic objects of a category and
write $=$ rather than $\simeq$.  We write $1$ for the canonical
singleton set and denote its single element by $\cdot$\,. In
Sections~\ref{sec:stateofart} and~\ref{sec:essence} we assume that
types are interpreted as objects in $\Set$, so that $1$ also denotes
the unit type in those sections. We write $\mathit{id}$ for identity
morphisms in a category and $\mathit{Id}$ for the identity functor on
a category.

\section{A Familiar Induction Rule}\label{sec:stateofart}

Consider the inductive data type $\mathit{Nat}$, which defines the
natural numbers and can be specified in a programming language with
Haskell-like syntax by
\[\mathit{data \; Nat} \; = \; \mathit{Zero}\, \mid\, \mathit{Succ}\,
\mathit{Nat} \] 
\noindent
The observation that $\mathit{Nat}$ is the least fixed point of the
functor $N$ on $\Set$ --- i.e., on the category of sets and functions
--- defined by $N X \; = \; 1 \,+\, X$ can be used to define the
following iteration operator:
\[\begin{array}{lcl}  
\mathit{foldNat} & \; = \; & X \ra (X \ra X) \ra \mathit{Nat} \ra X\\
\mathit{foldNat}\,z\,s \,\mathit{Zero} & = & z\\
\mathit{foldNat}\,z\,s \,(\mathit{Succ}\,n) & = & s \,
(\mathit{foldNat} \,z \,s \, n)\\
\end{array}\]
The iteration operator $\mathit{foldNat}$ provides a uniform means of
expressing common and naturally occurring patterns of recursion over
the natural numbers.

Categorically, iteration operators such as $\mathit{foldNat}$ arise
from the initial algebra semantics of data types, in which every data
type is regarded as the carrier of the initial algebra of a functor
$F$.  If $\B$ is a category and $F$ is a functor on $\B$, then an {\em
  $F$-algebra} is a morphism $h : F X \rightarrow X$ for some object
$X$ of $\B$. We call $X$ the {\em carrier} of $h$. For any functor
$F$, the collection of $F$-algebras itself forms a category $\Alge{F}$
which we call the {\em category of $F$-algebras}. In $\Alge{F}$, an
{\em $F$-algebra morphism} between $F$-algebras $h : F X \ra X$ and $g
: F Y \ra Y$ is a map $f : X \rightarrow Y$ such that the following
diagram commutes:
\[ \xymatrix{ F A \;\; \ar[r]^{F f} \ar[d]_{h} & F B \ar[d]^{g}\\ 
A \;\; \ar[r]^{f} & B}
\] 
When it exists, the initial $F$-algebra $\mathit{in} : F (\mu F) \ra
\mu F$ is unique up to isomorphism and has the least fixed point $\mu
F$ of $F$ as its carrier. Initiality ensures that there is a unique
$F$-algebra morphism $\mathit{fold} \ h : \mu F \ra X$ from
$\mathit{in}$ to any $F$-algebra $h: F X \ra X$. This gives rise to
the following iteration operator $\mathit{fold}$ for the inductive
type $\mu F$:
\[\begin{array}{lcl}
\mathit{fold}  & : & (F X \ra X) \ra \mu F \ra X\\
\mathit{fold}\, h \, (\mathit{in}\, t) & \;\;= \;\;& h \,
(F\, (\mathit{fold}\, h)\, t) 
\end{array}\]
Since $\mathit{fold}$ is derived from initial algebra semantics it is
principled and correct. It is also expressive, since it
can be defined for {\em every} inductive type. In fact, $\mathit{fold}$
is a single iteration operator parameterised over inductive types rather
than a family of iteration operators, one for each such
type, and the iteration operator $\mathit{foldNat}$ above is the
instantiation to $\mathit{Nat}$ of the generic iteration operator
$\mathit{fold}$.  

The iteration operator $\mathit{foldNat}$ can be used to derive the
standard induction rule for $\mathit{Nat}$ which coincides with the
standard induction rule for natural numbers, i.e., with the familiar
principle of mathematical induction.  This rule says that if a
property $P$ holds for $0$, and if $P$ holds for $n + 1$ whenever it
holds for a natural number $n$, then $P$ holds for all natural
numbers. Representing each property of natural numbers as a predicate
$P \,: \,\mathit{Nat} \ra \Set$ mapping each $n : \mathit{Nat}$ to the
set of proofs that $P$ holds for $n$, we wish to represent this rule
at the object level as a function $\mathit{indNat}$ with type
\[\begin{array}{lcll}
& & \forall (P : \mathit{Nat} \ra \Set). &
P\, \mathit{Zero} \ra (\forall n : \mathit{Nat}.\, P\,n \ra P \,
(\mathit{Succ}\,n)) \ra (\forall n : \mathit{Nat}. \, P\, n)\\
\end{array}\]
Code fragments such as the above, which involve quantification over
sets, properties, or functors, are to be treated as ``categorically
inspired'' within this paper. This is because quantification over such
higher-kinded objects cannot be interpreted in $\Set$. In order to
give a formal interpretation to code fragments like the one above, we
would need to work in a category such as that of modest sets. While
the ability to work with functors over categories other than $\Set$ is
one of the motivations for working in the general fibrational setting
of Section~\ref{sec:fibrations}, formalising the semantics of such
code fragments would obscure the central message of this paper. Our
decision to treat such fragments as categorically inspired is
justified in part by the fact that the use of category theory to
suggest computational constructions has long been regarded as fruitful
within the functional programming community (see,
e.g.,~\cite{bdm97,bm98,mog91}).

A function $\mathit{indNat}$ with the above type takes as input the
property $P$ to be proved, a proof $\phi$ that $P$ holds for
$\mathit{Zero}$, and a function $\psi$ mapping each $n : \mathit{Nat}$
and each proof that $P$ holds for $n$ to a proof that $P$ holds for
$\mathit{Succ}\,n$, and returns a function mapping each $n :
\mathit{Nat}$ to a proof that $P$ holds for $n$, i.e., to an element
of $P\,n$.  We can write $\mathit{indNat}$ in terms of
$\mathit{foldNat}$ --- and thus reduce induction for $\mathit{Nat}$ to
iteration for $\mathit{Nat}$ --- as follows. First note that
$\mathit{indNat}$ cannot be obtained by instantiating the type $X$ in
the type of $\mathit{foldNat}$ to a type of the form $P n$ for a
specific $n$ because $\mathit{indNat}$ returns elements of the types
$P\, n$ for different values $n$ and these types are, in general,
distinct from one another. We therefore need a type containing all of
the elements of $P \,n$ for every $n$. Such a type can informally be
thought of as the union over $n$ of $P n$, and is formally given by
the dependent type $\Sigma n : \mathit{Nat}.\, P \,n$ comprising pairs
$(n,p)$ where $n : \mathit{Nat}$ and $p : P\, n$.

The standard approach to defining $\mathit{indNat}$ is thus to apply
$\mathit{foldNat}$ to an $N$-algebra with carrier $\Sigma n :
\mathit{Nat}. \,P\, n$. Such an algebra has components $\alpha :
\Sigma n : Nat.\, P\, n$ and $\beta : \Sigma n : Nat. \,P\, n \;\ra\;
\Sigma n : Nat.\, P\, n$.  Given $\phi : P\,\mathit{Zero}$ and $\psi :
\forall n. \,P\,n \ra P \,(\mathit{Succ}\,n)$, we choose $\alpha \; =
\; (\mathit{Zero}, \phi)$ and $\beta\, (n,p) \; = \;
(\mathit{Succ}\,n, \psi \, n \, p)$ and note that $foldNat \, \alpha
\, \beta : Nat \ra \Sigma n : Nat. \, P \,n$.  We tentatively take
$\mathit{indNat} \,\, P \,\, \phi \,\, \psi \,\, n$ to be $p$, where
$foldNat\, \alpha \, \beta \, n \; = \; (m, p)$. But in order to know
that $p$ actually gives a proof for $n$ itself, we must show that $m =
n$. Fortunately, this follows easily from the uniqueness of
$\mathit{foldNat} \, \alpha \, \beta$. Indeed, we have that
\[\xymatrix{ 1+Nat \;\; \ar[r]^{} \ar[d]_{in} & 1+\Sigma n :
  Nat.\, P\,n \ar[r]^{} \ar[d]_{[\alpha,\beta]} & 1+Nat \ar[d]_{in}\\
  Nat \ar[r]^{\!\!\!\!\!\!\!\!\!\!\!\!\!\!\!\!\!\mathit{foldNat}
    \,\alpha\,\beta} & \Sigma n : Nat.\,P\,n
  \ar[r]^{\;\;\;\;\;\;\;\;\lambda (n,p). \, n} & \mathit{Nat}}\]
\noindent
commutes and, by initiality of $\mathit{in}$, that $(\lambda (n,p). \,
n) \circ (foldNat \, \alpha\,\beta)$ is the identity map. Thus
\[
n = (\lambda (n,p). \, n) (foldNat \, \alpha\,\beta \, n) = (\lambda
(n,p). \, n) (m,p) = m
\]
Letting $\pi'_P$ be the second projection on dependent pairs involving
the predicate $P$, the induction rule for $\mathit{Nat}$ is thus
\[\begin{array}{lcl}
 \mathit{indNat} & \; : \;& \forall (P : \mathit{Nat} \ra \Set). \;
 P\, \mathit{Zero} \ra (\forall n : \mathit{Nat}.\, P\,n \ra P \,
 (\mathit{Succ}\,n))\\
 &   & \;\;\;\; \ra (\forall n : \mathit{Nat}. \, P\, n)\\
 \mathit{indNat}\; P\; \phi\; \; \psi & = & \pi'_P\; \circ 
 (\mathit{foldNat}\,\; (\mathit{Zero}, \phi) \,\;
 (\lambda (n,p).\, (\mathit{Succ}\,n, \psi \, n\,p)))\\  
\end{array}\]
\noindent
As expected, this induction rule states that, for every property $P$,
to construct a proof that $P$ holds for every $n : \mathit{Nat}$, it
suffices to provide a proof that $P$ holds for $\mathit{Zero}$, and to
show that, for any $n : \mathit{Nat}$, if there is a proof that $P$
holds for $n$, then there is also a proof that $P$ holds for
$\mathit{Succ} \,n$.

The use of dependent types is fundamental to this formalization of the
induction rule for $\mathit{Nat}$, but this is only possible because
properties to be proved are taken to be set-valued functions. The
remainder of this paper uses fibrations to generalise the above
treatment of induction to arbitrary inductive functors and arbitrary
properties which are suitably fibred over the category whose objects
interpret types. In the general fibrational setting, properties are
given axiomatically via the fibrational structure rather than assumed
to be (set-valued) functions.

\section{Induction Rules for Predicates over $\Set$}
\label{sec:essence} 

The main result of this paper is the derivation of a sound induction
rule that is generic over all inductive types and which can be used to
verify any notion of property that is fibred over the category whose
objects interpret types. In this section we assume that types are
modelled by sets, so the functors we consider are on $\Set$ and the
properties we consider are functions mapping data to sets of proofs
that these properties hold for them. We make these assumptions because
it allows us to present our derivation in the simplest setting
possible, and also because type theorists often model properties in
exactly this way. This makes the present section more accessible and,
since the general fibrational treatment of induction can be seen as a
direct generalisation of the treatment presented here,
Section~\ref{sec:fibrations} should also be more easily digestible
once the derivation is understood in this special case. Although the
derivation of this section can indeed be seen as the specialisation of
that of Section~\ref{sec:fibrations} to the families fibration over
$\Set$, no knowledge of fibrations is required to understand it
because all constructions are given concretely rather than in their
fibrational forms.

We begin by considering what we might naively expect an induction rule
for an inductive data type $\mu F$ to look like. The derivation for
$\mathit{Nat}$ in Section~\ref{sec:stateofart} suggests that, in
general, it should look something like this:
\[\mathit{ind} \; : \; \forall P : \mu \,F \ra \Set.\;\; 
       ??? \; \ra \; \forall \, x : \mu F.\,  P\, x
\]
\noindent
But what should the premises --- denoted $???$ here --- of the generic
induction rule $\mathit{ind}$ be? Since we want to construct, for any
term $x : \mu F$, a proof term of type $P\,x$ from proof terms for
$x$'s substructures, and since the functionality of the
$\mathit{fold}$ operator for $\mu F$ is precisely to compute a value
for $x : \mu F$ from the values for $x$'s substructures, it is natural
to try to equip $P$ with an $F$-algebra structure that can be input to
$\mathit{fold}$ to yield a mapping of each $x : \mu F$ to an element
of $P\,x$. But this approach quickly hits a snag. Since the codomain
of every predicate $P : \mu F \ra \Set$ is $\Set$ itself, rather than
an object of $\Set$, $F$ cannot be applied to $P$ as is needed to
equip $P$ with an $F$-algebra structure. Moreover, an induction rule
for $\mu F$ cannot be obtained by applying $\mathit{fold}$ to an
$F$-algebra with carrier $P\,x$ for any specific $x$. This suggests
that we should try to construct an $F$-algebra not for $P\,x$ for each
term $x$, but rather for $P$ itself.

Such considerations led Hermida and Jacobs~\cite{hj98} to define a
category of predicates $\mathcal{P}$ and a lifting for each polynomial
functor $F$ on $\Set$ to a functor $\hat{F}$ on ${\mathcal{P}}$ that
respects the structure of $F$. They then constructed
$\hat{F}$-algebras with carrier $\mathcal{P}$ to serve as the premises
of their induction rules. The crucial part of their construction,
namely the lifting of polynomial functors, proceeds inductively and
includes clauses such as
\[(\widehat{F+G}) \;P = \hat{F} P + \hat{G} P\]
\noindent
and
\[(\widehat{F \times G}) \; P = \hat{F} P \times \hat{G} P\] 

\vspace*{0.05in}

\noindent
The construction of Hermida and Jacobs is very general: they consider
functors on bicartesian categories rather than just on $\Set$, and
represent properties by bicartesian fibrations over such categories
instead of using the specific notion of predicate from
Definition~\ref{def:preds} below. On the other hand, they define
liftings for polynomial functors.

The construction we give in this section is in some sense orthogonal
to Hermida and Jacobs': we focus exclusively on functors on $\Set$ and
a particular category of predicates, and show how to define liftings
for all inductive functors on $\Set$, including non-polynomial
ones. In this setting, the induction rule we derive properly extends
Hermida and Jacobs', thus catering for a variety of data types that
they cannot treat. In the next section we derive analogous results in
the general fibrational setting. This allows us to derive sound
induction rules for initial algebras of functors defined on categories
other than $\Set$ which can be used to prove arbitrary properties that
are suitably fibred over the category interpreting types.

We begin with the definition of a predicate.

\begin{defi}\label{defn:predicates}
  Let $X$ be a set. A {\em predicate on $X$} is a function $P : X \ra
  \Set$ mapping each $x \in X$ to a set $P\,x$. We call $X$ the {\em
    domain} of $P$.
\end{defi}

We may speak simply of ``a predicate $P$'' if the domain of $P$ is
understood. A predicate $P$ on $X$ can be thought of as mapping each
element $x$ of $X$ to the set of proofs that $P$ holds for $x$.  We
now define our category of predicates.

\begin{defi}\label{def:preds}
  The {\em category of predicates} ${\mathcal{P}}$ has predicates as
  its objects. A morphism from a predicate $P : X \ra \Set$ to a
  predicate $P' : X' \ra \Set$ is a pair $(f,f^\sim):P \rightarrow P'$
  of functions, where $f:X \rightarrow X'$ and $f^\sim:\forall x :
  X.\, P \,x \rightarrow P' (f \,x)$. Composition of predicate
  morphisms is given by $(g,g^\sim) \circ (f,f^\sim) = (g \circ f,\,
  \lambda x p. \,g^\sim (fx)(f^\sim xp))$.
\end{defi}
\noindent
Diagrammatically, we have
\[ \xymatrix{X \;\; \ar[rr]^{f} \ar[dr]_P^{}="a" & & X' \ar[dl]^{P'}_{}="b" \\
 & \Set  \ar"a";"b"^{f^\sim\ } &}
\]
\noindent
As the diagram indicates, the notion of a morphism from $P$ to $P'$
does not require the sets of proofs $P \,x$ and $P'\,(f\, x)$, for any
$x \in X$, to be {\em equal}. Instead, it requires only the existence
of a function $f^\sim$ which maps, for each $x$, each proof in $P \,x$
to a proof in $P'\,(f\, x)$. We denote by $U: {\mathcal{P}}
\rightarrow \Set$ the {\em forgetful functor} mapping each predicate
$P : X \ra \Set$ to its domain $X$ and each predicate morphism $(f,
f^\sim)$ to $f$.

An alternative to Definition~\ref{def:preds} would take the category
of predicates to be the arrow category over $\Set$, but the natural
lifting in this setting does not indicate how to generalise liftings
to other fibrations. Indeed, if properties are modelled as functions,
then every functor can be applied to a property, and hence every
functor can be its own lifting. In the general fibrational setting,
however, properties are not necessarily modelled by functions, so a
functor cannot, in general, be its own lifting. The decision not to
use arrow categories to model properties is thus dictated by our
desire to lift functors in a way that indicates how liftings can be
constructed in the general fibrational setting.

We can now give a precise definition of a lifting.

\begin{defi}\label{def:liftingA}
  Let $F$ be a functor on $\Set$. A {\em lifting} of $F$ from $\Set$
  to ${\mathcal{P}}$ is a functor $\hat{F}$ on ${\mathcal{P}}$ such
  that the following diagram commutes:
\[ \xymatrix{ {\mathcal{P}} \;\; \ar[r]^{\hat{F}} \ar[d]_{U} &
  {\mathcal{P}} \ar[d]^U \\ 
\Set \ar[r]_{F} & \Set}
\]
\end{defi}
\noindent
We can decode the definition of $\hat{F}$ as follows. The object part
of $\hat{F}$ must map each predicate $P : X \rightarrow \Set$ to a
predicate $\hat{F}P: F\, X \rightarrow \Set$, and thus can be thought
of type-theoretically as a function $\forall (X:\Set). \; (X
\rightarrow \Set) \rightarrow F\, X \rightarrow \Set$. Of course,
$\hat{F}$ must also act on morphisms in a functorial manner.

We can now use the definition of a lifting to derive the standard
induction rule from Section~\ref{sec:stateofart} for $\mathit{Nat}$ as
follows.

\begin{ex}\label{ex:nats}
  The data type of natural numbers is $\mu N$ where $N$ is the functor
  on $\Set$ defined by $N\, X = 1 + X$. A lifting $\hat{N}$ of $N$ can
  be defined by sending each predicate $P : X \ra \Set$ to the
  predicate $\widehat{N}P : NX \ra \Set$ given by
\[\begin{array}{lll}
\hat{N} P\, (\mathit{inl} \,\cdot) & \;=\; & 1 \\
\hat{N} P \,(\mathit{inr} \,n) & \;=\; & P \,n  
\end{array}\]
An $\hat{N}$-algebra with carrier $P:\Nat \rightarrow \Set$ can
be given by $\mathit{in} : 1 + \Nat \rightarrow \Nat$ and $in^\sim
\; : \; \forall t : 1 + \Nat. \;\hat{N} P \,t \ra P (\mathit{in}
\,t)$. Since $\mathit{in} \,(\mathit{inl} \, \cdot) = 0$ and
$\mathit{in} \,(\mathit{inr} \, n) = n + 1$, we see that $in^\sim$
consists of an element $h_1 : P \,0$ and a function $h_2 : \forall n :
\Nat. \; P\, n \ra P\,(n+1)$.  Thus, the second component
$\mathit{in}^\sim$ of an $\hat{N}$-algebra with carrier $P :
\mathit{Nat} \ra \Set$ and first component $\mathit{in}$ gives the
premises of the familiar induction rule in Example~\ref{ex:nats}.
\end{ex}

The notion of predicate comprehension is a key ingredient of our
lifting. It begins to explain, abstractly, what the use of
$\Sigma$-types is in the theory of induction, and is the key construct
allowing us to define liftings for non-polynomial, as well as
polynomial, functors.

\begin{defi}\label{def:comprehension}
  Let $P$ be a predicate on $X$. The {\em comprehension} of $P$,
  denoted $\{P\}$, is the type $\Sigma x:X.\, P\,x$ comprising pairs
  $(x, p)$ where $x : X$ and $p : P x$.  The map taking each predicate
  $P$ to $\{P\}$, and taking each predicate morphism $(f,f^\sim) : P
  \ra P'$ to the morphism $\{(f,f^\sim)\} : \{P\} \ra \{P'\}$ defined
  by $\{(f,f^\sim)\} (x,p) = (f x, f^\sim x \,p)$, defines the {\em
    comprehension functor} $\{-\}$ from ${\mathcal{P}}$ to $\Set$.
\end{defi}

We are now in a position to define liftings uniformly for all
functors:

\begin{defi}\label{def:liftingC}
  If $F$ is a functor on $\Set$, then the lifting $\hat{F}$ is the
  functor on ${\mathcal{P}}$ given as follows. For every predicate $P$
  on $X$, $\hat{F}\, P \,:\, F\,X \ra \Set$ is defined by $\hat{F}\,P
  = (F \, \pi_{P})^{-1}$, where the natural transformation $\pi :
  \{-\} \ra U$ is given by $\pi_{P} \, (x,p) = x$. For every predicate
  morphism $f : P \ra P'$, $\hat{F} f = (k,k^\sim)$ where $k = F U f$,
  and $k^\sim : \forall y : FX. \, \hat{F}P \,y \ra \hat{F}P'\, (k \,
  y)$ is given by $k^\sim \, y \, z = F\{f\} z$.
\end{defi}
\noindent
In the above definition, note that the inverse image $f^{-1}$ of $f :
X \rightarrow Y$ is indeed a predicate $P : Y \ra \Set$. Thus if $P$
is a predicate on $X$, then $\pi_P : \{P\} \ra X$ and $F \pi_P : F
\{P\} \ra FX$.  Thus $\hat{F}P$ is a predicate on $FX$, so $\hat{F}$
is a lifting of $F$ from $\Set$ to ${\mathcal{P}}$. The lifting
$\hat{F}$ captures an ``all'' modality, in that it generalises
Haskell's \verb|all| function on lists to arbitrary data types. A
similar modality is given in~\cite{mor07} for indexed containers.

The lifting in Example~\ref{ex:nats} is the instantiation of the
construction in Definition~\ref{def:liftingC} to the functor $N X = 1
+ X$ on $\Set$. Indeed, if $P$ is any predicate, then $\hat{N}\,P =
(N\, \pi_P)^{-1}$, i.e., $\hat{N}\,P = (\mathit{id} +
\pi_P)^{-1}$. Then, since the inverse image of the coproduct of
functions is the coproduct of their inverse images, since
$\mathit{id}^{-1}\, 1 = 1$, and since $\pi_P^{-1} n = \{(n,p) \mid p :
P n\}$ for all $n$, we have $\hat{N}\,P \, (inl \, \cdot) = 1$ and
$\hat{N}\,P \, (inr \, n) = P\,n$. As we will see, a similar situation
to that for $\mathit{Nat}$ holds in general: for any functor $F$ on
$\Set$, the second component of an $\hat{F}$-algebra whose carrier is
the predicate $P$ on the data type $\mu F$ and whose first component
is $\mathit{in}$ gives the premises of an induction rule that can be
used to show that $P$ holds for all data of type $\mu F$.

The rest of this section shows that $F$-algebras with carrier $\{P\}$
are interderivable with $\hat{F}$-algebras with carrier $P$, and then
uses this result to derive our induction rule.

\begin{defi}\label{def:truth}
  The functor $K_1:\Set \rightarrow {\mathcal{P}}$ maps each set $X$
  to the predicate $K_1 X = \lambda x:X.\, 1$ on $X$ and each $f : X
  \ra Y$ to the predicate morphism $(f, \lambda x:X.\,id)$.
\end{defi}

\noindent
The predicate $K_1X$ is called the {\em truth predicate on $X$}. For
every $x : X$, the set $K_1 X x$ of proofs that $K_1 X$ holds for $x$
is a singleton, and thus is non-empty. We intuitively think of a
predicate $P : X \ra \Set$ as being true if $P x$ is non-empty for
every $x : X$. We therefore consider $P$ to be true if there exists a
predicate morphism from $K_1X$ to $P$ whose first component is
$\mathit{id}_X$.  For any functor $F$, the lifting $\hat{F}$ is {\em
  truth-preserving}, i.e., $\hat{F}$ maps the truth predicate on any
set $X$ to that on $F X$.

\begin{lem}\label{lem:aux}
 For any functor $F$ on $\Set$ and any set $X$, $\hat{F}(K_1X) \cong
 K_1(FX)$.
\end{lem}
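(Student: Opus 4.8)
The plan is to show $\hat{F}(K_1X) \cong K_1(FX)$ by unpacking the definition of the lifting (Definition~\ref{def:liftingC}) and exploiting the special structure of the truth predicate. Recall that $\hat{F}P = (F\pi_P)^{-1}$, so I must first understand what $\{K_1X\}$ and $\pi_{K_1X}$ are for the truth predicate. Since $K_1X = \lambda x{:}X.\,1$, the comprehension $\{K_1X\} = \Sigma x{:}X.\,1$ consists of pairs $(x,\cdot)$, and this is clearly isomorphic to $X$ itself via the first projection $\pi_{K_1X}(x,\cdot) = x$. The key observation is therefore that $\pi_{K_1X} : \{K_1X\} \to X$ is an isomorphism.

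First I would establish this isomorphism $\{K_1X\} \cong X$ explicitly, with $\pi_{K_1X}$ as the witnessing map. Next, since $F$ is a functor and functors preserve isomorphisms, $F\pi_{K_1X} : F\{K_1X\} \to FX$ is also an isomorphism, and in fact (under the identification $\{K_1X\} = X$) it is essentially $\mathit{id}_{FX}$. Then I would compute the inverse image: $\hat{F}(K_1X) = (F\pi_{K_1X})^{-1}$, which as a predicate on $FX$ sends each $y : FX$ to the fibre $(F\pi_{K_1X})^{-1}(y) = \{\, w : F\{K_1X\} \mid F\pi_{K_1X}(w) = y \,\}$. Because $F\pi_{K_1X}$ is an isomorphism, each such fibre is a singleton, and hence $\hat{F}(K_1X)\,y \cong 1$ for every $y : FX$. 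But $\lambda y{:}FX.\,1$ is exactly $K_1(FX)$, which gives the desired object-level isomorphism.

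To complete the argument I would verify that this pointwise bijection assembles into a bona fide isomorphism in the category ${\mathcal{P}}$ of predicates (Definition~\ref{def:preds}). A morphism $\hat{F}(K_1X) \to K_1(FX)$ in ${\mathcal{P}}$ is a pair $(f,f^\sim)$; here the base component $f$ should be $\mathit{id}_{FX}$ (both predicates share the domain $FX$), and $f^\sim$ should be the map collapsing each singleton fibre to the single element of $1$. I would then exhibit the inverse pair and check that the two composites are identities, using the fact that $\pi$ is a natural transformation $\{-\} \to U$ so that everything is compatible with the forgetful functor $U$; this guarantees the base components compose correctly to $\mathit{id}_{FX}$.

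The main obstacle I anticipate is not the pointwise singleton computation, which is essentially forced, but rather making the naturality and functoriality bookkeeping precise: one must be careful that $F\pi_{K_1X}$ being an isomorphism genuinely yields that every fibre is a singleton (this uses injectivity \emph{and} surjectivity of $F\pi_{K_1X}$), and that the resulting fibrewise collapse is natural enough to constitute a predicate morphism with identity base component. A subtle point is that the isomorphism $\{K_1X\} \cong X$ must be transported through $F$ correctly, and since the paper's convention is to identify isomorphic objects and write $=$ for $\cong$, I expect the cleanest route is to argue the isomorphism abstractly from ``$F$ preserves isos, and inverse images along isos are singleton-valued,'' rather than grinding through the explicit element-level definitions of $F$.
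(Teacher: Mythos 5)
Your argument is exactly the paper's proof: unpack $\hat{F}(K_1X) = (F\pi_{K_1X})^{-1}$, note $\pi_{K_1X}$ is an isomorphism because each $K_1X\,x$ is a singleton, conclude $F\pi_{K_1X}$ is an isomorphism and hence has singleton fibres, giving $(F\pi_{K_1X})^{-1} \cong \lambda y{:}FX.\,1 = K_1(FX)$. The extra bookkeeping you describe about assembling the pointwise bijection into a predicate morphism is fine but not something the paper bothers to spell out.
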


\begin{proof}
By Definition~\ref{def:liftingC}, $\hat{F}(K_1
X) = (F\pi_{K_1 X})^{-1}$. We have that $\pi_{K_1 X}$ is an
isomorphism because there is only one proof of $K_1 X$ for each $x :
X$, and thus that $F\,\pi_{K_1 X}$ is an isomorphism as well. As a
result, $(F\,\pi_{K_1 X})^{-1}$ maps every $y : F X$ to a singleton
set, and therefore $\hat{F}(K_1 X) = (F\pi_{K_1 X})^{-1} \cong \lambda y :
F X.\, 1 = K_1 (F X)$.
\end{proof}

\vspace{0.1in}

\noindent 

The fact that $K_1$ is a left-adjoint to $\{-\}$ is critical to the
constructions below. This is proved in~\cite{hj98}; we include its
proof here for completeness and to establish notation. The description
of comprehension as a right adjoint can be traced back to
Lawvere~\cite{law70}.

\begin{lem}\label{lem:adjoints}
$K_1$ is left adjoint to $\{-\}$.
\end{lem}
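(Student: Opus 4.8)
The plan is to establish the adjunction by exhibiting a bijection of hom-sets
\[
\Phi_{X,P} \,:\, {\mathcal{P}}(K_1 X, P) \;\cong\; \Set(X, \{P\})
\]
natural in the set $X$ and the predicate $P$, which is exactly the data of an adjunction $K_1 \dashv \{-\}$. The description of comprehension as a right adjoint is, after all, the content we are after, so the hom-set formulation is the most transparent route.

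First I would unwind both sides concretely. Fix a set $X$ and a predicate $P : Y \ra \Set$. By Definition~\ref{def:preds}, a morphism $(f, f^\sim) : K_1 X \ra P$ consists of a function $f : X \ra Y$ together with, for each $x : X$, a function $f^\sim\, x : (K_1 X)\,x \ra P(f x)$; since $(K_1 X)\,x = 1$, giving $f^\sim\,x$ is the same as choosing a single element $f^\sim\, x\,\cdot \in P(f x)$. On the other side, a morphism $g : X \ra \{P\}$ in $\Set$ sends each $x$ to a dependent pair whose first component lies in $Y$ and whose second lies in $P$ of that first component. I would then define $\Phi_{X,P}(f, f^\sim) = \lambda x.\, (f x,\, f^\sim\, x\,\cdot)$, with inverse sending $g$ to the predicate morphism $(\pi_P \circ g,\, f^\sim)$ where $\pi_P$ is the projection of Definition~\ref{def:liftingC}, $\pi'_P$ is the second projection on dependent pairs, and $f^\sim\, x\,\cdot = \pi'_P(g x)$. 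Since an element of $\{P\}$ is precisely a dependent pair and a function out of $1$ is precisely a choice of element, these two assignments are manifestly mutually inverse, so $\Phi_{X,P}$ is a bijection.

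The remaining work is naturality, which I would check square by square. For naturality in $X$, given $u : X' \ra X$ in $\Set$, I would compute $K_1 u = (u, \lambda x'.\,\mathit{id})$ and, via the composition formula of Definition~\ref{def:preds}, verify that both $\Phi_{X',P}\big((f, f^\sim) \circ K_1 u\big)$ and $\Phi_{X,P}(f, f^\sim) \circ u$ send $x'$ to $(f(u\,x'),\, f^\sim(u\,x')\,\cdot)$. For naturality in $P$, given $(h, h^\sim) : P \ra P'$ in ${\mathcal{P}}$, I would unfold the composite $(h, h^\sim) \circ (f, f^\sim)$ and compare $\Phi_{X,P'}$ of it with $\{(h, h^\sim)\} \circ \Phi_{X,P}(f, f^\sim)$.

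The main obstacle is precisely this last comparison: it is the one point where the action of $\{-\}$ on morphisms (Definition~\ref{def:comprehension}) and the composition law in ${\mathcal{P}}$ must be made to coincide, and one must keep the dependent typing straight. Unwinding shows both sides send $x$ to $\big(h(f x),\, h^\sim(f x)\,(f^\sim\, x\,\cdot)\big)$, since $\{(h,h^\sim)\}(f x,\, f^\sim\,x\,\cdot) = (h(f x),\, h^\sim(f x)(f^\sim\, x\,\cdot))$ by Definition~\ref{def:comprehension}, which matches the second component of the ${\mathcal{P}}$-composite computed from Definition~\ref{def:preds}. Once these unfoldings are in hand the naturality squares commute and the adjunction follows; equivalently, one may simply read off the unit $\eta_X : X \ra \{K_1 X\}$, $x \mapsto (x, \cdot)$ — the canonical isomorphism $X \cong \Sigma x : X.\,1$ — and the counit $\epsilon_P = (\pi_P,\, \epsilon^\sim) : K_1\{P\} \ra P$ with $\epsilon^\sim\, w\,\cdot = \pi'_P w$, whose triangle identities reduce to the very same routine calculations.
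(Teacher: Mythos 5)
Your proof is correct and follows essentially the same route as the paper's: both exhibit the explicit hom-set bijection $\mathcal{P}(K_1 X, P)\cong\Set(X,\{P\})$ by pairing/unpairing dependent pairs (your $\Phi_{X,P}$ and its inverse are exactly the paper's $(-)^{\hash}$ and $(-)^\dagger$). The only difference is that you spell out the naturality squares and the unit/counit, which the paper merely asserts.
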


\begin{proof}
  We must show that, for any predicate $P$ and any set $Y$, the set
  ${{\mathcal{P}}}(K_1Y,P)$ of morphisms from $K_1Y$ to $P$ in
  ${\mathcal{P}}$ is in bijective correspondence with the set $\Set
  (Y,\{P\})$ of morphisms from $Y$ to $\{P\}$ in $\Set$.  Define maps
  $(-)^\dagger : \Set(Y, \{P\}) \ra {{\mathcal{P}}}(K_1 Y, P)$ and
  $(-)^\hash : {{\mathcal{P}}}(K_1 Y, P) \ra \Set(Y, \{P\})$ by
  $h^\dagger = (h_1, h_2)$ where $h y = (v, p), \,h_1 y = v$ and $h_2
  y = p$, and $(k, k^\sim)^\hash = \lambda (y : Y).\,(k y, k^\sim y)$.
  These give a natural isomorphism between $\Set(Y, \{P\})$ and
  ${{\mathcal{P}}}(K_1 Y, P)$. 
\end{proof}

\vspace*{0.1in}

Naturality of $(-)^\dagger$ ensures that $(g \circ f)^\dagger \; = \;
g^\dagger \circ K_1 f$ for all $f : Y' \ra Y$ and $g : Y \ra
\{P\}$. Similarly for $(-)^\hash$. Moreover, $id^\dagger$ is
the counit, at $P$, of the adjunction between $K_1$ and $\{-\}$. These
observations are used in the proof of
Lemma~\ref{lem:Psi}. Lemmas~\ref{lem:Phi} and~\ref{lem:Psi} are the
key results relating $F$-algebras and $\hat{F}$ algebras, i.e.,
relating iteration and induction. They are special cases of Theorem
~\ref{thm:adj} below, but we include their proofs to ensure continuity
of our presentation and to ensure that this section is self-contained.

We first we show how to construct $\hat{F}$-algebras from $F$-algebras

\begin{lem}\label{lem:Phi}
 There is a functor $\Phi:\Alge{F} \rightarrow \Alge{\hat{F}}$ such
 that if $k:FX \rightarrow X$, then $\Phi k : \hat{F}(K_1 X)
 \rightarrow K_1 X$.
\end{lem}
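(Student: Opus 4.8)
The plan is to build $\Phi$ directly from the functor $K_1$ together with the truth-preservation isomorphism of Lemma~\ref{lem:aux}. Given an $F$-algebra $k : FX \ra X$, I first observe what an $\hat{F}$-algebra with carrier $K_1 X$ must be: a morphism $\hat{F}(K_1 X) \ra K_1 X$ in $\mathcal{P}$, i.e.\ a pair $(f, f^\sim)$ with $f : FX \ra X$ (since $\hat{F}(K_1 X)$ has domain $FX$ and $K_1 X$ has domain $X$) and $f^\sim$ a family of maps $\hat{F}(K_1 X)\,y \ra K_1 X\,(f\,y)$. Because $K_1 X\,(f\,y) = 1$ for every $y$, the component $f^\sim$ is forced to be the unique map into the singleton, so the only genuine datum is $f$, which I take to be $k$ itself. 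Equivalently, and more abstractly, I would define $\Phi k = (K_1 k) \circ \iota_X$, where $\iota_X : \hat{F}(K_1 X) \xrightarrow{\cong} K_1(FX)$ is the isomorphism of Lemma~\ref{lem:aux} and $K_1 k : K_1(FX) \ra K_1 X$ is obtained by applying the functor $K_1$ to $k$. The two descriptions agree because both have underlying $\Set$-morphism $k$ and both have second component landing in a singleton.

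On morphisms, given an $F$-algebra morphism $g : X \ra X'$ from $k$ to $k'$ (so that $g \circ k = k' \circ Fg$), I set $\Phi g = K_1 g$. To see that this is an $\hat{F}$-algebra morphism from $\Phi k$ to $\Phi k'$, I must check that the square
\[
\xymatrix{
\hat{F}(K_1 X) \ar[r]^{\hat{F}(K_1 g)} \ar[d]_{\Phi k} & \hat{F}(K_1 X') \ar[d]^{\Phi k'}\\
K_1 X \ar[r]_{K_1 g} & K_1 X'
}
\]
commutes in $\mathcal{P}$. Since equality of morphisms in $\mathcal{P}$ is equality of both components, and the first component of a composite is the composite of the first components (Definition~\ref{def:preds}), I would verify commutation componentwise. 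On first components the square reads $g \circ k = k' \circ Fg$, using that the underlying map of $\hat{F}(K_1 g)$ is $FU(K_1 g) = Fg$ by Definition~\ref{def:liftingC}; this is exactly the hypothesis that $g$ is an $F$-algebra morphism. On second components both composites are maps into $K_1 X' = \lambda x.\,1$, hence land in a singleton and so automatically coincide. Functoriality of $\Phi$ is then immediate, since $K_1$ is itself a functor: $\Phi(\mathit{id}) = K_1(\mathit{id}) = \mathit{id}$ and $\Phi(g' \circ g) = K_1(g' \circ g) = K_1 g' \circ K_1 g = \Phi g' \circ \Phi g$.

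The step I expect to require the most care is not a genuine obstacle but rather the bookkeeping of the second (proof-transforming) components. The whole construction works precisely because $\hat{F}$ is truth-preserving: the carrier $K_1 X$ is the truth predicate, so every proof-level component of $\Phi k$ and of the coherence square collapses to the unique map into $1$. Recognising this collapse is what reduces all the verifications to the corresponding $\Set$-level statements---the $F$-algebra law for the object part and the $F$-algebra-morphism law for the morphism part---and thereby lets me check the square componentwise without appealing to naturality of $\iota$ explicitly.
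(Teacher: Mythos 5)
Your proof is correct and follows essentially the same route as the paper's: define $\Phi k = K_1 k$ (modulo the isomorphism $\hat{F}(K_1X) \cong K_1(FX)$ from Lemma~\ref{lem:aux}) and $\Phi g = K_1 g$ on morphisms, with functoriality inherited from $K_1$. The paper leaves the componentwise verification of the algebra-morphism square implicit, whereas you spell out that the first components reduce to the $F$-algebra-morphism law and the second components collapse into singletons; this is a correct elaboration, not a different argument.
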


\begin{proof} 
  For an $F$-algebra $k:FX \rightarrow X$ define $\Phi k = K_1 k$, and
  for two $F$-algebras $k:FX \rightarrow X$ and $k':FX' \rightarrow
  X'$ and an $F$-algebra morphism $h : X \ra X'$ between them define
  the $\hat{F}$-algebra morphism $\Phi h : \Phi k \rightarrow \Phi k'$
  by $\Phi h = K_1 h$. Then $K_1(FX) \cong \hat{F}(K_1X)$ by
  Lemma~\ref{lem:aux}, so that $\Phi k$ is an $\hat{F}$-algebra and
  $K_1 h$ is an $\hat{F}$-algebra morphism. It is easy to see that
  $\Phi$ preserves identities and composition.
\end{proof}

\noindent
We can also construct $F$-algebras from $\hat{F}$-algebras.

\begin{lem}\label{lem:Psi}
The functor $\Phi$ has a right adjoint $\Psi$ such that if $j :
\hat{F}P \ra P$, then $\Psi j: F \{P\} \ra \{P\}$. 
\end{lem}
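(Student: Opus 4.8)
The plan is to define $\Psi$ explicitly via comprehension, to isolate the one isomorphism that makes the construction work, and then to obtain the adjunction by transposing the algebra-morphism condition across the adjunction $K_1 \dashv \{-\}$ of Lemma~\ref{lem:adjoints}. For the object part, given an $\hat{F}$-algebra $j : \hat{F}P \ra P$ with $P$ a predicate on $X$, I would build an $F$-algebra on $\{P\}$ using the key observation that comprehension commutes with the lifting. Since $\hat{F}P = (F\pi_P)^{-1}$ by Definition~\ref{def:liftingC}, its comprehension is $\{\hat{F}P\} = \Sigma y : FX.\,(F\pi_P)^{-1}\, y$, and the map $(y,w)\mapsto w$ exhibits a natural isomorphism $\theta_P : F\{P\} \ra \{\hat{F}P\}$ (with inverse $w \mapsto (F\pi_P\, w,\, w)$). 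I would then set $\Psi j = \{j\} \circ \theta_P$, which is a map $F\{P\} \ra \{P\}$, i.e.\ an $F$-algebra with carrier $\{P\}$, exactly as the statement requires. On morphisms $\Psi$ acts by comprehension, $\Psi(f,f^\sim) = \{(f,f^\sim)\}$; functoriality of $\Psi$ and the fact that it carries $\hat{F}$-algebra morphisms to $F$-algebra morphisms follow from functoriality of $\{-\}$ and naturality of $\theta$.

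For the adjunction $\Phi \dashv \Psi$, I would show that the bijection $(-)^\hash : {\mathcal{P}}(K_1 X, P) \cong \Set(X, \{P\})$ of Lemma~\ref{lem:adjoints} cuts down to a bijection between the two subsets of algebra morphisms. Concretely, for an $F$-algebra $k : FX \ra X$, an $\hat{F}$-algebra morphism $m : \Phi k \ra j$ is a predicate morphism $m : K_1 X \ra P$ with $m \circ \Phi k = j \circ \hat{F} m$, while an $F$-algebra morphism $g : k \ra \Psi j$ is a map $g : X \ra \{P\}$ with $g \circ k = \Psi j \circ F g$. Writing $g = m^\hash$, and recalling $\Phi k = K_1 k \circ \sigma_X$ where $\sigma_X : \hat{F}(K_1 X) \ra K_1(FX)$ is the isomorphism of Lemma~\ref{lem:aux}, I would transpose the equation $m \circ \Phi k = j \circ \hat{F} m$ to the base. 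Since $\sigma_X$ is an isomorphism, precomposition with $\sigma_X^{-1}$ followed by $(-)^\hash$ is a bijection $\mathcal{P}(\hat{F}(K_1 X), P) \cong \Set(FX,\{P\})$, so the original equation holds iff the transposed one does.

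The transposition then splits into two routine computations. The left-hand side gives $m \circ \Phi k \circ \sigma_X^{-1} = m \circ K_1 k$, whose transpose is $m^\hash \circ k = g \circ k$ by naturality of $(-)^\hash$ in its domain (the property recorded after Lemma~\ref{lem:adjoints}). For the right-hand side I would use the counit description $m = \epsilon_P \circ K_1 g$, so that $\hat{F} m = \hat{F}\epsilon_P \circ \hat{F}K_1 g$; pushing $\sigma$ through the naturality square $\sigma_{\{P\}}\circ \hat{F}K_1 g = K_1 Fg \circ \sigma_X$ turns $j \circ \hat{F} m \circ \sigma_X^{-1}$ into $(j \circ \hat{F}\epsilon_P \circ \sigma_{\{P\}}^{-1}) \circ K_1(Fg)$, whose transpose is $(j \circ \hat{F}\epsilon_P \circ \sigma_{\{P\}}^{-1})^\hash \circ Fg$. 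Thus $m$ is an $\hat{F}$-algebra morphism iff $g$ satisfies $g \circ k = (j \circ \hat{F}\epsilon_P \circ \sigma_{\{P\}}^{-1})^\hash \circ Fg$, and naturality of the underlying bijection of Lemma~\ref{lem:adjoints} makes the correspondence natural in $k$ and $j$.

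The main obstacle is the bookkeeping in that last step: one must check that the ``mate'' of $j$ obtained this way, the composite $(j \circ \hat{F}\epsilon_P \circ \sigma_{\{P\}}^{-1})^\hash$, agrees on the nose with the concretely defined $\Psi j = \{j\}\circ\theta_P$. This reduces to verifying that the two canonical isomorphisms in play, the truth-preservation isomorphism $\sigma$ of Lemma~\ref{lem:aux} and the comprehension isomorphism $\theta$ above, are compatible with the unit and counit of $K_1 \dashv \{-\}$; everything else is routine naturality, and the $\mathit{Nat}$ instance of Example~\ref{ex:nats} provides a useful sanity check, since there $\Psi j$ recovers precisely the $N$-algebra $[\alpha,\beta]$ used in Section~\ref{sec:stateofart}. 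Since this lemma is the instance $K_1 \dashv \{-\}$ of the general adjoint-lifting Theorem~\ref{thm:adj}, an alternative would be to read off $\Psi$ and the adjunction directly as the mate construction there, but the explicit argument above keeps the present section self-contained.
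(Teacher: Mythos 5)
Your proposal is correct and follows essentially the same route as the paper: the paper also defines $\Psi$ on morphisms by $\{-\}$ and establishes $\Phi\dashv\Psi$ by transposing the algebra-morphism square across $K_1\dashv\{-\}$ using naturality of $(-)^\dagger$/$(-)^\hash$ and the isomorphism of Lemma~\ref{lem:aux}. The only cosmetic difference is that the paper defines $\Psi j$ directly as the mate $(j\circ\hat{F}\,\mathit{id}^\dagger)^\hash$, whereas you give the concrete formula $\{j\}\circ\theta_P$ and then check agreement; the two coincide (and note only that your $\theta_P:F\{P\}\to\{\hat{F}P\}$ should be $w\mapsto(F\pi_P\,w,\,w)$, with $(y,w)\mapsto w$ as its inverse).
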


\begin{proof} 
  We construct the adjoint functor $\Psi:\Alge{\hat{F}} \rightarrow
  \Alge{F}$ as follows. Given an $\hat{F}$-algebra $j:\hat{F} P
  \rightarrow P$, we use the fact that $\hat{F}(K_1\{P\}) \cong
  K_1(F\{P\})$ by Lemma~\ref{lem:aux} to define $\Psi j : F \{P\}
  \rightarrow \{P\}$ by $\Psi j = (j \circ \hat{F}
  \,id^\dagger)^\hash$.  To specify the action of $\Psi$ on
  an $\hat{F}$-algebra morphism $h$, define $\Psi h = \{h\}$.  Clearly
  $\Psi$ preserves identity and composition.

  Next we show $\Phi \dashv \Psi$, i.e., for every $F$-algebra $k:FX
  \rightarrow X$ and $\hat{F}$-algebra $j:\hat{F} P \rightarrow P$
  with $P$ a predicate on $X$, there is a natural isomorphism between
  $F$-algebra morphisms from $k$ to $\Psi j$ and $\hat{F}$-algebra
  morphisms from $\Phi k$ to $j$. We first observe that an $F$-algebra
  morphism from $k$ to $\Psi j$ is a map from $X$ to $\{P\}$, and an
  $\hat{F}$-algebra morphism from $\Phi k$ to $j$ is a map from $K_1
  X$ to $P$. A natural isomorphism between such maps is given by the
  adjunction $K_1 \dashv \{-\}$ from Lemma~\ref{lem:adjoints}.  We
  must check that $f:X \rightarrow \{P\}$ is an $F$-algebra morphism
  from $k$ to $\Psi j$ iff $f^\dagger : K_1 X \rightarrow P$ is an
  $\hat{F}$-algebra morphism from $\Phi k$ to $j$.

  To this end, assume $f:X \rightarrow \{P\}$ is an $F$-algebra
  morphism from $k$ to $\Psi j$, i.e., assume $f \circ k = \Psi j
  \circ F f$. We must prove that $f^\dagger \circ\, \Phi k = j \circ
  \hat{F} f^\dagger$. By the definition of $\Phi$ in
  Lemma~\ref{lem:Phi}, this amounts to showing $f^\dagger \circ K_1 k
  = j \circ \hat{F} f^\dagger$. Now, since $(-)^\dagger$ is an
  isomorphism, $f$ is an $F$-algebra morphism iff $(f \circ k)^\dagger
  = (\Psi j \circ F f)^\dagger$. Naturality of $(-)^\dagger$ ensures
  that $(f \circ k)^\dagger = f^\dagger \circ K_1 k$ and that $(\Psi j
  \circ F f)^\dagger = (\Psi j)^\dagger\circ K_1 (F f)$, so the
  previous equality holds iff
  \begin{eqnarray}\label{eqn:aux}
    f^\dagger \circ K_1 k & = & (\Psi j)^\dagger\circ K_1 (F f)
  \end{eqnarray}
  But
  \[\begin{array}{lll}
    & j \circ \hat{F} f^\dagger & \\
    = & j \circ \hat{F} \mathit{id}^\dagger \circ K_1f) & \mbox{ by
      naturality of } (-)^\dagger \mbox{ and } f = id \circ f\\
    = &(j \circ \hat{F}\, id^\dagger) \circ \hat{F}(K_1f) &
    \mbox{ by the functoriality of } \hat{F}\\ 
    = & (\Psi j)^\dagger \circ K_1(F f) & \mbox{ by the definition of } \Psi,
    \mbox{ the fact that } (-)^\dagger \mbox{ and } (-)^\hash\\ 
    & & \mbox{\;\;\; are inverses, and Lemma~\ref{lem:aux}}\\   
    = & f^\dagger \circ K_1 k & \mbox{ by Equation~\ref{eqn:aux}}\\
  \end{array}\]
  Thus, $f^\dagger$ is indeed an $\hat{F}$-algebra morphism from $\Phi
  k$ to $j$. 
\end{proof}

\vspace*{0.1in}

Lemma~\ref{lem:Psi} ensures that $F$-algebras with carrier $\{P\}$ are
interderivable with $\hat{F}$-algebras with carrier $P$.  For example,
the $N$-algebra $[\alpha,\beta]$ with carrier $\{P\}$ from
Section~\ref{sec:stateofart} can be derived from the $\hat{N}$-algebra
with carrier $P$ given in Example~\ref{ex:nats}.  Since we define a
lifting $\hat{F}$ for any functor $F$, Lemma~\ref{lem:Psi} thus shows
how to construct $F$-algebras with carrier $\Sigma x : \mu F.\, P x$
for any functor $F$ and predicate $P$ on $\mu F$.

\begin{cor}\label{cor:fixed-point}
  For any functor $F$ on $\Set$, the predicate $K_1 (\mu F)$ is the
  carrier of the initial $\hat{F}$-algebra.
\end{cor}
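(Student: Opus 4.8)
The plan is to read the result straight off the adjunction $\Phi \dashv \Psi$ established in Lemma~\ref{lem:Psi}. The key observation is that a left adjoint preserves all colimits, and in particular preserves the initial object, which is the colimit of the empty diagram. Now the initial object of the category $\Alge{F}$ is precisely the initial $F$-algebra $\mathit{in} : F(\mu F) \ra \mu F$, whose existence is the standing assumption for the inductive type $\mu F$. Since $\Phi$ is a left adjoint, I would conclude immediately that $\Phi\,\mathit{in}$ is the initial object of $\Alge{\hat{F}}$, i.e.\ the initial $\hat{F}$-algebra.

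It then remains only to identify the carrier of $\Phi\,\mathit{in}$. By the definition of $\Phi$ in Lemma~\ref{lem:Phi}, $\Phi\,\mathit{in} = K_1\,\mathit{in}$, which is an $\hat{F}$-algebra of the form $\hat{F}(K_1(\mu F)) \cong K_1(F(\mu F)) \ra K_1(\mu F)$, where the isomorphism is the one supplied by Lemma~\ref{lem:aux}. Its carrier is therefore $K_1(\mu F)$, exactly as claimed.

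If one prefers not to invoke the general fact that left adjoints preserve initial objects, the same conclusion can be unwound directly through the hom-set bijection underlying $\Phi \dashv \Psi$: for any $\hat{F}$-algebra $j : \hat{F}P \ra P$, the $\hat{F}$-algebra morphisms from $\Phi\,\mathit{in}$ to $j$ are in natural bijection with the $F$-algebra morphisms from $\mathit{in}$ to $\Psi j$, and by initiality of $\mathit{in}$ there is exactly one of the latter, namely $\mathit{fold}\,(\Psi j)$. Hence $\Phi\,\mathit{in}$ admits a unique morphism to every object of $\Alge{\hat{F}}$ and so is initial. Either way the argument is essentially a one-line consequence of Lemma~\ref{lem:Psi}; the only point requiring a moment's care is the identification of the carrier through Lemma~\ref{lem:aux}, so I do not expect any genuine obstacle here.
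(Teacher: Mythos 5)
Your proposal is correct and follows essentially the same route as the paper: it invokes the adjunction $\Phi \dashv \Psi$ from Lemma~\ref{lem:Psi}, uses the fact that left adjoints preserve initial objects, and identifies the carrier of $\Phi\,\mathit{in}$ as $K_1(\mu F)$ via Lemma~\ref{lem:Phi}. The explicit unwinding of the hom-set bijection you offer as an alternative is a fine elaboration but adds nothing beyond the paper's one-line argument.
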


\begin{proof} 
  Since $\Phi$ is a left adjoint it preserves initial objects, so
  applying $\Phi$ to the initial $F$-algebra $in :F(\mu F) \rightarrow
  \mu F$ gives the initial $\hat{F}$-algebra. By Lemma~\ref{lem:Phi},
  $\Phi\,in$ has type $\hat{F}(K_1 (\mu F)) \ra K_1 (\mu F)$, so the
  carrier of the initial $\hat{F}$-algebra is $K_1(\mu F)$.
\end{proof}

\vspace*{0.1in}

We can now derive our generic induction rule. For every predicate $P$
on $X$ and every $\hat{F}$-algebra $(k,k^\sim) : \hat{F} P \rightarrow
P$, Lemma~\ref{lem:Psi} ensures that $\Psi$ constructs from
$(k,k^\sim)$ an $F$-algebra with carrier $\{P\}$. 
Applying the iteration operator to this algebra gives a map
\[
\mathit{fold} \;(\Psi \,(k,k^\sim)) : \mu F \rightarrow \{P\}
\]
This map decomposes into two parts: $\phi\, = \, \pi_P \circ
\mathit{fold} \;(\Psi \,(k,k^\sim)) : \mu F \rightarrow X$ and $\psi\,
: \,\forall (t : \mu F). \,P\,(\phi\, t)$.  Initiality of $in : F(\mu
F) \ra \mu F$, the definition of $\Psi$, and the naturality of $\pi_P$
ensure $\phi = \mathit{fold}\,k$.  Recalling that $\pi'_P$ is the
second projection on dependent pairs involving the predicate $P$, this
gives the following sound generic induction rule for the type $X$,
which reduces induction to iteration:
\[\begin{array}{lll}
\mathit{genind} & : & \forall \, (F : \Set \rightarrow \Set) \; (P
: X \ra \Set) \; ((k,k^\sim) : (\hat{F}P \ra P)) \;(t : \mu F).\\
 & & \;\;\;\;\;\; P\, (\mathit{fold} \, k \, t)\\ 
\mathit{genind} \, F \, P & = & \pi_P' \circ \mathit{fold} \circ \Psi
\end{array}\]
\noindent
Notice this induction rule is actually capable of dealing with
predicates over arbitrary sets and not just predicates over $\mu
F$. However, when $X = \mu F$ and $k = in$, initiality of $in$ further
ensures that $\phi \,=\, \mathit{fold} \, in \, = \, id$, and thus
that $\mathit{genind}$ specialises to the expected induction rule for
an inductive data type $\mu F$:
\[\begin{array}{lll}
\mathit{ind} & : & \forall \, (F : \Set \rightarrow \Set) \;
(P : \mu F \ra \Set) \;
((k,k^\sim) : (\hat{F}P \ra P)). \\
& & \;\;\;\;\;\;(k = in) \ra \forall (t : \mu F).\,P\, t\\ 
\mathit{ind} \, F \, P & = & \pi'_P \circ \mathit{fold} \circ \Psi
\end{array}\]
This rule can be instantiated to familiar rules for polynomial data
types, as well as to ones we would expect for data types such as rose
trees and finite hereditary sets, both of which lie outside the scope
of Hermida and Jacobs' method. 

\begin{ex}\label{ex:rose}
The data type of rose trees is given in Haskell-like syntax by
\[\mathit{data \; Rose} = \mathit{Node} (\mathit{List
 \,Rose})\]
\noindent
The functor underlying $Rose$ is $F X = \mathit{List} \, X$ and its
induction rule is 
\[\begin{array}{lll}
\mathit{indRose} & \;:\; & \forall \; (P
: \mathit{Rose} \ra \Set) \; ((k,k^\sim) : (\hat{F}P \ra P)).\\
               &       & \;\;\;\; (k = \mathit{in}) \ra \forall (x :
               \mathit{Rose}).\,P\, x\\  
\mathit{indRose} \, F \, P & \; = \; & \pi_P' \circ \mathit{fold} \circ \Psi
\end{array}\]
\noindent
Calculating $\hat{F}P = (F\pi_P)^{-1} : F \, \mathit{Rose} \ra \Set$,
and writing $xs \, !! \, k$ for the $k^{th}$ component of a list $xs$,
we have that
\[\begin{array}{lll}
 &   & \hat{F}\, P\, rs\\
 & = & \{z : F \{P\} \mid F \, \pi_p \, z = rs\}\\
 & = & \{cps : \mathit{List} \,
 \{P\} \mid \mathit{List} \;
 \pi_P \; cps = rs\}\\
 & = & \{cps : \mathit{List} \,
 \{P\} \mid \forall k < \mathit{length} \, 
 cps. \, \pi_P \, (cps \, !!\, k) = rs\, !!\, k\}
\end{array}\]
An $\hat{F}$-algebra whose underlying $F$-algebra is $\mathit{in}
: F  \, \mathit{Rose} \ra \mathit{Rose}$ is thus a pair of functions
$(\mathit{in}, k^\sim)$, where $k^\sim$ has type
\[\begin{array}{lll}
 & = & \forall rs : \mathit{List \,
   Rose}.\\
 &   &\;\;\; \{cps : \mathit{List} \, \{P\} \mid \forall k <
 \mathit{length} \, cps. \, \pi_P \, (cps \, !!\, k) = rs\, !!\, k\}
 \; \ra \; P \, (\mathit{Node} \, rs)\\
 & = & \forall rs : \mathit{List \,
   Rose}.\; (\forall k < \mathit{length} \; rs. \, P \, (rs \,!! \,k))
 \ra P (\mathit{Node} \, rs)\\
\end{array}\]
\noindent
The last equality is due to surjective pairing for dependent products
and the fact that $\mathit{length} \, cps = \mathit{length} \,
rs$. The type of $k^\sim$ gives the hypotheses of the induction rule
for rose trees.

\end{ex}

Although finite hereditary sets are defined in terms of quotients, and
thus lie outside the scope of previously known methods, they can be
treated with ours.

\begin{ex}\label{ex:hereditary}
  Hereditary sets are sets whose elements are themselves sets, and so
  are the core data structures within set theory. The data type $HS$
  of finitary hereditary sets is $\mu P_f$ for the finite powerset
  functor $P_f$.  We can derive an induction rule for finite
  hereditary sets as follows.  If $P:X \rightarrow \Set$, then $P_f
  \pi_P : P_f (\Sigma x:X. P x) \rightarrow P_f X$ maps each set
  $\{(x_1,p_1), \ldots, (x_n,p_n)\}$ to the set $\{x_1, \ldots,
  x_n\}$, so that $(P_f \pi_P)^{-1}$ maps a set $\{x_1, \ldots, x_n\}$
  to the set $P x_1 \times \ldots \times P x_n$.  A
  $\hat{P_f}$-algebra with carrier $P : HS \rightarrow \Set$ and first
  component $in$ therefore has as its second component a function of
  type
\[
\forall (\{s_1, \ldots, s_n\} : P_f(HS)). \, P s_1 \times \ldots \times
P s_n \rightarrow P(in \{s_1, \ldots, s_n\})
\]
The induction rule for finite hereditary sets is thus
\[\begin{array}{l}
indHS :: (\forall (\{s_1, \ldots, s_n\} : P_f(HS)). \, P s_1 \times
\ldots \times P s_n \rightarrow P(in \{s_1, \ldots, s_n\}))\\
\;\;\;\;\;\;\;\;\;\;\;\;\;\;\;\;\;\;\;\;\rightarrow \forall (s : HS). P\,s
\end{array}\]
\end{ex}

\section{Generic Fibrational Induction Rules}\label{sec:fibrations}

We can treat more general notions of predicates using fibrations. We
motivate the use of fibrations by observing that i) the semantics of
data types in languages involving recursion and other effects usually
involves categories other than $\Set$; ii) in such circumstances, the
notion of a predicate can no longer be taken as a function with
codomain $\Set$; and iii) even when working in $\Set$ there are
reasonable notions of ``predicate'' other than that in
Section~\ref{sec:essence}. (For example, a predicate on a set $X$
could be a subobject of $X$).  Moreover, when, in future work, we
consider induction rules for more sophisticated classes of data types
such as indexed containers, inductive families, and inductive
recursive families (see Section~\ref{sec:conclusion}), we will not
want to have to develop an individual {\em ad hoc} theory of induction
for each such class. Instead, we will want to appropriately
instantiate a single generic theory of induction. That is, we will
want a uniform axiomatic approach to induction that is widely
applicable, and that abstracts over the specific choices of category,
functor, and predicate giving rise to different forms of induction for
specific classes of data types.

Fibrations support precisely such an axiomatic approach. This section
therefore generalises the constructions of the previous one to the
general fibrational setting. The standard model of type theory based
on locally cartesian closed categories does arise as a specific
fibration --- namely, the codomain fibration over $\Set$ --- and this
fibration is equivalent to the families fibration over $\Set$. But the
general fibrational setting is far more flexible. Moreover, in locally
cartesian closed models of type theory, predicates and types coexist
in the same category, so that each functor can be taken to be its own
lifting. In the general fibrational setting, predicates are not simply
functions or morphisms, properties and types do not coexist in the
same category, and a functor cannot be taken to be its own
lifting. There is no choice but to construct a lifting from scratch. A
treatment of induction based solely on locally cartesian closed
categories would not, therefore, indicate how to treat induction in
more general fibrations.

Another reason for working in the general fibrational setting is that
this facilitates a direct comparison of our work with that of Hermida
and Jacobs~\cite{hj98}. This is important, since their approach is the
most closely related to ours. The main difference between their
approach and ours is that they use fibred products and coproducts to
define provably sound induction rules for polynomial functors, whereas
we use left adjoints to reindexing functors to define provably sound
induction rules for {\em all} inductive functors. In this section we
consider situations when both approaches are possible and give mild
conditions under which our results coincide with theirs when
restricted to polynomial functors.

The remainder of this section is organised as follows. In
Section~\ref{sec:nutshell} we recall the definition of a fibration,
expand and motivate this definition, and fix some basic terminology
surrounding fibrations. We then give some examples of fibrations,
including the families fibration over $\Set$, the codomain fibration,
and the subobject fibration. In Section~\ref{sec:magictheorem} we
recall a useful theorem from~\cite{hj98} that indicates when a
truth-preserving lifting of a functor to a category of predicates has
an initial algebra. This is the key theorem used to prove the
soundness of our generic fibrational induction rule. In
Section~\ref{sec:liftings} we construct truth-preserving liftings for
all inductive functors. We do this first in the codomain fibration,
and then, using intuitions from its presentation as the families
fibration over $\Set$, as studied in Section~\ref{sec:essence}, in a
general fibrational setting. Finally, in Section~\ref{sec:algebra} we
establish a number of properties of the liftings, and hence of the
induction rules, that we have derived. In particular, we characterise
the lifting that generates our induction rules.

\subsection{Fibrations in a Nutshell}\label{sec:nutshell} 

In this section we recall the notion of a fibration.  More details
about fibrations can be found in, e.g.,~\cite{jac99,pav90}. We begin
with an auxiliary definition.

\begin{defi}
  Let $U : \E \ra \B$ be a functor.  
\begin{enumerate}[(1)]
\item A morphism $g : Q \ra P$ in $\E$ is {\em cartesian} over a
  morphism $f : X \ra Y$ in $\B$ if $Ug = f$, and for every $g' : Q'
  \ra P$ in $\E$ for which $Ug' = f \circ v$ for some $v : UQ' \ra X$
  there exists a unique $h : Q' \ra Q$ in $\E$ such that $Uh = v$ and
  $g \circ h = g'$.
\item A morphism $g : P \ra Q$ in $\E$ is {\em opcartesian} over a
  morphism $f : X \ra Y$ in $\B$ if $Ug = f$, and for every $g' : P
  \ra Q'$ in $\E$ for which $Ug' = v \circ f$ for some $v : Y \ra UQ'$
  there exists a unique $h : Q \ra Q'$ in $\E$ such that $Uh = v$ and
  $h \circ g = g'$.
\end{enumerate}
\end{defi}

\noindent It is not hard to see that the cartesian morphism $f^\S_P$ over a
morphism $f$ with codomain $UP$ is unique up to isomorphism, and
similarly for the opcartesian morphism $f_\S^P$. If $P$ is an object
of $\E$, then we write $f^*P$ for the domain of $f^\S_P$ and
$\Sigma_fP$ for the codomain of $f_\S^P$. We can capture cartesian and
opcartesian morphisms diagrammatically as follows.

\vspace*{0.1in}

\xymatrix @C=1in
{ \E \ar[ddd]_-U & Q' \ar @{.>} [d]_h \ar[rd]^{g'}         &   &
& Q'\\
                        & f^* P \ar[r]_{f^{\S}_P}      & P & P \ar[ru]^{g'}
\ar[r]_{f_{\S}^P} & \Sigma_f P \ar @{.>}[u]_h\\
                        &  UQ' \ar[d]_v \ar[rd]^{U g'}   &   &
& UQ'\\
           \B           & X \ar[r]_f                 & Y   & X \ar[ru]^{Ug'}
\ar[r]_f      & Y \ar[u]_v\\
}

\vspace*{0.1in}

Cartesian morphisms (opcartesian morphisms) are the essence of
fibrations (resp., opfibrations). We introduce both fibrations and
their duals now since the latter will prove useful later in our
development. Below we speak primarily of fibrations, with the
understanding that the dual observations hold for opfibrations.

\begin{defi}
  Let $U : \E \ra \B$ be a functor. Then $U$ is a {\em fibration} if
  for every object $P$ of $\E$, and every morphism $f : X \ra UP$ in
  $\B$ there is a cartesian morphism $f^{\S}_P : Q \ra P$ in $\E$
  above $f$.  Similarly, $U$ is an {\em opfibration} if for every
  object $P$ of $\E$, and every morphism $f : UP \ra Y$ in $\B$ there
  is an opcartesian morphism $f_{\S}^P : P \ra Q$ in $\E$ above $f$.
  A functor $U$ a {\em bifibration} if it is simultaneously a
  fibration and an opfibration.
\end{defi}
If $U : \E \to \B$ is a fibration, we call $\B$ the {\em base
  category} of $U$ and $\E$ the {\em total category} of $U$.  Objects
of the total category $\E$ can be thought of as properties, objects of
the base category $\B$ can be thought of as types, and $U$ can be
thought of as mapping each property $P$ in $\E$ to the type $UP$ of
which $P$ is a property. One fibration $U$ can capture many different
properties of the same type, so $U$ is not injective on objects.  We
say that an object $P$ in $\E$ is \emph{above} its image $UP$ under
$U$, and similarly for morphisms. For any object $X$ of $\B$, we write
$\E_X$ for the {\em fibre above $X$}, i.e., for the subcategory of
$\E$ consisting of objects above $X$ and morphisms above $id$.  If $f
: X \to Y$ is a morphism in $\B$, then the function mapping each
object $P$ of $\E$ to $f^*P$ extends to a functor $f^* : \E_Y \to
\E_X$. Indeed, for each morphism $k : P \to P'$ in $\E_{Y}$, $f^*k$ is
the morphism satisfying $ k \circ f^\S_P = f^\S_{P'} \circ f^* k$.
The universal property of $f^\S_{P'}$ ensures the existence and
uniqueness of $f^* k$. We call the functor $f^*$ the {\em reindexing
  functor induced by $f$}. A similar situation ensures for
opfibrations, and we call the functor $\Sigma_f : \E_X \to \E_Y$ which
extends the function mapping each object $P$ of $\E$ to $\Sigma_f P$
the {\em opreindexing functor}.

\begin{ex}\label{ex:famfib}
  The functor $U:{{\mathcal{P}}}\to\Set$ defined in
  Section~\ref{sec:essence} is called the {\em families fibration over
    $\Set$}. Given a function $f: X \ra Y$ and a predicate $P : Y \ra
  \Set$ we can define a cartesian map $f^{\S}_P$ whose domain $f^*P$
  is $P \circ f$, and which comprises the pair $(f, \lambda x:X.\,
  id)$.  The fibre $\mathcal{P}_X$ above a set $X$ has
  predicates $P:X \ra \Set$ as its objects. A morphism in
  $\mathcal{P}_X$ from $P : X \ra \Set$ to $P' : X \ra \Set$ is a
  function of type $\forall x : X.\, Px \ra P'x$.
\end{ex}

\begin{ex}\label{ex:cod}
  Let $\B$ be a category. The {\em arrow category} of $\B$, denoted
  $\B^\to$, has the morphisms, or arrows, of $\B$ as its objects. A
  morphism in $\B^\to$ from $f:X\to Y$ to $f':X'\to Y'$ is a pair
  $(\alpha_1, \alpha_2)$ of morphisms in $\B$ such that the following
  diagram commutes:
  \[ \xymatrix{ X \;\; \ar[r]^{\alpha_1} \ar[d]_{f} &
    X' \ar[d]^{f'} \\
    Y \ar[r]_{\alpha_2} & Y'}\]
  \noindent
  i.e., such that $\alpha_2 \circ f = f' \circ \alpha_1$. It is easy
  to see that this definition indeed gives a category.

  The codomain functor $\mathit{cod} : \B^{\ra} \ra \B$ maps an object
  $f:X \ra Y$ of $\B^{\ra}$ to the object $Y$ of $\B$ and a morphism
  $(\alpha_1, \alpha_2)$ of $\B^\to$ to $\alpha_2$.  If $\B$ has
  pullbacks, then $\mathit{cod}$ is a fibration, called the {\em
    codomain fibration over $\B$}. Indeed, given an object $f : X \to
  Y$ in the fibre above $Y$ and a morphism $f':X' \ra Y$ in $\B$, the
  pullback of $f$ along $f'$ gives a cartesian morphism above $f'$ as
  required. The fibre above an object $Y$ of $\B$ has those morphisms
  of $\B$ that map into $Y$ as its objects. A morphism in $(\B^\to)_Y$
  from $f:X \to Y$ to $f' : X' \to Y$ is a morphism $\alpha_1 : X \to
  X'$ in $\B$ such that $f = f' \circ \alpha_1$.
\end{ex}

\begin{ex}
  If $\B$ is a category, then the category of subobjects of $\B$,
  denoted $Sub(\B)$, has monomorphisms in $\B$ as its objects. A
  monomorphism $f : X \hookrightarrow Y$ is called a {\em subobject}
  of $Y$.  A morphism in $Sub(\B)$ from $f : X \hookrightarrow Y$ to
  $f' : X' \hookrightarrow Y'$ is a pair of morphisms $(\alpha_1,
  \alpha_2)$ in $\B$ such that $\alpha_2 \circ f = f' \circ \alpha_1$.

  The map $U:Sub(\B) \ra \B$ sending a subobject $f : X
  \hookrightarrow Y$ to $Y$ extends to a functor.  If $\B$ has
  pullbacks, then $U$ is a fibration, called the {\em subobject
    fibration over $\B$}; indeed, pullbacks again give cartesian
  morphisms since the pullback of a monomorphism is a
  monomorphism. The fibre above an object $Y$ of $\B$ has as objects
  the subobjects of $Y$. A morphism in $Sub(\B)_Y$ from $f : X
  \hookrightarrow Y$ to $f': X' \hookrightarrow Y$ is a map $\alpha_1
  : X \to X'$ in $\B$ such that $f = f' \circ \alpha_1$. If such a
  morphism exists then it is, of course, unique.
\end{ex}

\subsection{Lifting, Truth, and Comprehension}\label{sec:magictheorem}

We now generalise the notions of lifting, truth, and comprehension
to the general fibrational setting. We prove that, in such a setting,
if an inductive functor has a truth-preserving lifting, then its
lifting is also inductive. We then see that inductiveness of the
lifted functor is sufficient to guarantee the soundness of our generic
fibrational induction rule. This subsection is essentially our
presentation of pre-existing results from~\cite{hj98}. We include it
because it forms a natural part of our narrative, and because simply
citing the material would hinder the continuity of our presentation.

Recall from Section~\ref{sec:essence} that the first step in deriving
an induction rule for a datatype interpreted in $\Set$ is to lift the
functor whose fixed point the data type is to the category
$\mathcal{P}$ of predicates. More specifically, in
Definition~\ref{def:liftingA} we defined a lifting of a functor $F:
\Set \ra \Set$ to be a functor $\hat{F}:{\mathcal{P}} \ra
{\mathcal{P}}$ such that $U \hat{F} = F U$. We can use these
observations to generalise the notion of a lifting to the fibrational
setting as follows.

\begin{defi}
  Let $U:\E \ra \B$ be a fibration and $F$ be a functor on $\B$. A
  {\em lifting} of $F$ with respect to $U$ is a functor $\hat{F}:\E
  \rightarrow \E$ such that the following diagram commutes:
  \[ \xymatrix{ \E \ar[r]^{\hat{F}} \ar[d]_{U} & \E \ar[d]^U \\
    \B \ar[r]_{F} & \B}
\]
\end{defi}
\noindent
In Section~\ref{sec:essence} we saw that if $P:X \ra \Set$ is a
predicate over $X$, then $\hat{F} P$ is a predicate over $FX$. The
analogous result for the general fibrational setting observes that if
$\hat{F}$ is a lifting of $F$ and $X$ is an object of $\B$, then
$\hat{F}$ restricts to a functor from $\E_X$ to $\E_{F X}$.

By analogy with our results from Section~\ref{sec:essence}, we further
expect that the premises of a fibrational induction rule for a
datatype $\mu F$ interpreted in $\B$ should constitute an
$\hat{F}$-algebra on $\E$. But in order to construct the conclusion of
such a rule, we need to understand how to axiomatically state that a
predicate is true. In Section~\ref{sec:essence}, a predicate $P:X \ra
\Set$ is considered true if there is a morphism in $\mathcal{P}$ from
$K_1X$, the truth predicate on $X$, to $P$ that is over
$\mathit{id}_X$. Since the mapping of each set $X$ to $K_1 X$ is the
action on objects of the truth functor $K_1:\Set \ra {\mathcal{P}}$
(cf.~Definition~\ref{def:truth}), we actually endeavour to model the
truth functor for the families fibration over $\Set$ axiomatically in
the general fibrational setting.

Modeling the truth functor axiomatically amounts to understanding its
universal property. Since the truth functor in
Definition~\ref{def:truth} maps each set $X$ to the predicate $\lambda
x : x.\, 1$, for any set $X$ there is therefore exactly one morphism
in the fibre above $X$ from any predicate $P$ over $X$ to
$K_1\,X$. This gives a clear categorical description of $K_1\,X$ as
a terminal object of the fibre above $X$ and leads, by analogy, to
the following definition.

\begin{defi}\label{def:gtruth}
  Let $U:\E\to\B$ be a fibration. Assume further that, for every
  object $X$ of $\B$, the fibre $\E_X$ has a terminal object $K_1\,X$
  such that, for any $f:X'\to X$ in $\B$, $f^*(K_1\,X) \cong
  K_1\,X'$. Then the assignment sending each object $X$ in $\B$ to
  $K_1 X$ in $\E$, and each morphism $f : X' \to X$ in $\B$ to the
  morphism $f^\S_{K_1X}$ in $\E$ defines the (fibred) \emph{truth
    functor} $K_1 : \B\to\E$.
\end{defi}

\noindent
The (fibred) truth functor is sometimes called the (fibred) {\em
  terminal object functor}. With this definition, we have the
following standard result:

\begin{lem}\label{lem:k1frau}
$K_1$ is a (fibred) right adjoint for $U$.
\end{lem}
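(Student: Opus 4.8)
The plan is to establish the adjunction $U \dashv K_1$ by showing that, for each object $P$ of $\E$ and each object $X$ of $\B$, the functor $U$ itself induces a natural bijection $\E(P, K_1 X) \cong \B(UP, X)$ sending $g \mapsto Ug$. Since $K_1 X$ lies above $X$ and $K_1 f$ above $f$, we have $U K_1 = \mathit{Id}_\B$, so the counit of the intended adjunction will be the identity and the unit $\eta_P : P \to K_1(UP)$ will be the unique vertical map into the fibre terminal object; the whole argument therefore reduces to a single lifting lemma.

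The key step is to prove that for every $P$ in $\E$ and every $f : UP \to X$ in $\B$ there is exactly one morphism $P \to K_1 X$ in $\E$ lying above $f$. For existence, I would form the cartesian morphism $f^\S_{K_1 X} : f^*(K_1 X) \to K_1 X$ over $f$; by the defining property of the truth functor (Definition~\ref{def:gtruth}), $f^*(K_1 X) \cong K_1(UP)$ is terminal in the fibre $\E_{UP}$, so there is a unique vertical map $P \to f^*(K_1 X)$, and composing it with $f^\S_{K_1 X}$ yields a morphism over $f$. For uniqueness, any $g : P \to K_1 X$ with $Ug = f$ factors through the cartesian morphism $f^\S_{K_1 X}$ by a unique vertical map $P \to f^*(K_1 X)$, obtained from the universal property of cartesianness applied with $v = \mathit{id}_{UP}$; since $f^*(K_1 X) \cong K_1(UP)$ is terminal in $\E_{UP}$, that vertical map is forced, and hence so is $g$.

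With this lemma in hand, the conclusion is immediate. The map $g \mapsto Ug$ from $\E(P, K_1 X)$ to $\B(UP, X)$ is surjective by the existence half and injective by the uniqueness half, giving the required bijection. Naturality in both $P$ and $X$ is automatic: each naturality square lies over a single base morphism, and the uniqueness half of the lemma shows that any two parallel $\E$-morphisms over the same base map coincide, so the two composites around the square agree. The triangle identities then hold trivially, with $\eta_P$ the terminal map in $\E_{UP}$ and the counit the identity. I expect the main obstacle to be the careful combination of the cartesian universal property with the reindexing-preservation of terminal objects ($f^*(K_1 X) \cong K_1(UP)$) needed to pin down uniqueness; once that is done, fibredness of the adjunction comes for free, since the unit is vertical and the counit is the identity on the base.
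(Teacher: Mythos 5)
Your proposal is correct. Note that the paper offers no proof of this lemma at all: it is stated as a ``standard result,'' with the reader referred to the literature, and the only facts the text goes on to use are that the counit is the identity (so $UK_1 = \mathit{Id}$) and that $K_1$ is full and faithful. Your argument supplies the standard proof behind that citation, and it is sound: the whole adjunction does reduce to the lifting lemma that every $f : UP \to X$ has a unique morphism $P \to K_1X$ above it, which you establish correctly by factoring any such morphism through the cartesian lift $f^\S_{K_1X}$ and using that $f^*(K_1X) \cong K_1(UP)$ is terminal in $\E_{UP}$ --- exactly the two hypotheses packaged into Definition~\ref{def:gtruth}. Two minor remarks. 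First, naturality of the bijection $g \mapsto Ug$ is even more immediate than you suggest: it follows from functoriality of $U$ alone ($U(K_1k \circ g \circ m) = k \circ Ug \circ Um$ since $UK_1k = k$), though your uniqueness-based argument does work when applied to the inverse bijection, whose naturality squares are equations between $\E$-morphisms into an object of the form $K_1X'$. Second, your identification of the unit as the terminal map in $\E_{UP}$ and the counit as the identity matches what the paper records after the lemma, and full-and-faithfulness of $K_1$ (also asserted there) drops out of your bijection by taking $P = K_1Y$. So your proof not only establishes the lemma but also recovers the auxiliary facts the paper actually relies on downstream.
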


The interested reader may wish to consult the literature on fibrations
for the definition of a fibred adjunction, but a formal definition
will not be needed here. Instead, we can simply stress that a fibred
adjunction is first and foremost an adjunction, and then observe that
the counit of this adjunction is the identity, so that $U K_1 =
\mathit{Id}$. Moreover, $K_1$ is full and faithful. One simple way to
guarantee that a fibration has a truth functor is to assume that both
$\E$ and $\B$ have terminal objects and that $U$ maps a terminal
object of $\E$ to a terminal object of $\B$.  In this case, the fact
that reindexing preserves fibred terminal objects ensures that every
fibre of $\E$ indeed has a terminal object.

The second fundamental property of liftings used in
Section~\ref{sec:essence} is that they are truth-preserving. This
property can now easily be generalised to the general fibrational
setting (cf.~Definition~\ref{lem:aux}).

\begin{defi}
  Let $U:\E \ra \B$ be a fibration with a truth functor $K_1:\B \ra
  \E$, let $F$ be a functor on $\B$, and let $\hat{F}:\E \rightarrow
  \E$ be a lifting of $F$. We say that $\hat{F}$ is a {\em
    truth-preserving} lifting of $F$ if, for any object $X$ of $\B$,
  we have $\hat{F} (K_1 X) \cong K_1 (FX)$. 
\end{defi}

The final algebraic structure we required in Section~\ref{sec:essence}
was a comprehension functor $\{-\} : {\mathcal{P}} \ra \Set$. To
generalise the comprehension functor to the general fibrational
setting we simply note that its universal property is that it is right
adjoint to the truth functor $K_1$
(cf.~Definition~\ref{lem:adjoints}). We single out for special
attention those fibrations whose truth functors have right adjoints.

\begin{defi}\label{def:ccu}
  Let $U : \E\to\B$ be a fibration with a truth functor $K_1 :
  \B\to\E$. Then $U$ is a {\em comprehension category with unit} if
  $K_1$ has a right adjoint.
\end{defi}
\noindent
If $U : \E\to\B$ is a comprehension category with unit, then we call
the right adjoint to $K_1$ the {\em comprehension functor} and denote
it by $\{-\}:\E \ra \B$.  With this machinery in place, Hermida and
Jacobs~\cite{hj98} show that if $U$ is a comprehension category with
unit and $\hat{F}$ is a truth-preserving lifting of $F$, then
$\hat{F}$ is inductive if $F$ is and, in this case, the carrier $\mu
\hat{F}$ of the initial $\hat{F}$-algebra is $K_1 (\mu F)$. This is
proved as a corollary to the following more abstract theorem.

\begin{thm}\label{thm:adj}
  Let $F : \B \to \B$, $G : \A \to \A$, and $S : \B \to \A$ be
  functors. A natural transformation $\alpha : G S \to S F$, i.e., a
  natural transformation $\alpha$ such that
 \[\dia{
   {\A} \ar[r]^G \ar@{=>}"1,1"+<14px,-14px>;"2,2"-<14px,-14px>^\alpha & {\A}  \\
   \B \ar[u]^S \ar[r]_F & {\B} \ar[u]_S
 }\]
\noindent
induces a functor 

\[\dia{Alg_F \ar[r]^\Phi &Alg_G}\]

\noindent
given by $\Phi\ (f : F X \to X) = S\,f\circ \alpha_X$. Moreover, if
$\alpha$ is an isomorphism, then a right adjoint $T$ to $S$ induces a
right adjoint 
\[\adjunction{Alg_F}{Alg_G}{\Phi}{\Psi}\]

\noindent
given by $\Psi (g : GX \ra X) = Tg \circ \beta_X$, where $\beta :
F T \to T G$ is the image of $G\,\epsilon \circ \alpha^{-1}_T
: S F T \to G$ under the adjunction isomorphism
$Hom(S\,X,\,Y)\cong Hom(X,\,T\,Y)$, and $\epsilon : S T \to
id$ is the counit of this adjunction.
\end{thm}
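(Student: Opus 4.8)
**The plan is to prove the two claims in sequence: first that $\alpha$ induces a functor $\Phi$, then that an isomorphism $\alpha$ together with $T \dashv S$ being... wait, let me check the adjunction direction.**The plan is to establish the two assertions in order. For the functor $\Phi$, I would first check that it is well defined on objects: if $f : FX \ra X$ then $Sf \circ \alpha_X : GSX \ra SX$ is a $G$-algebra with carrier $SX$, since $\alpha_X : GSX \ra SFX$ and $Sf : SFX \ra SX$. For a morphism, I would take an $F$-algebra morphism $h : X \ra Y$ from $f$ to $f'$ and show $Sh$ is a $G$-algebra morphism from $\Phi f$ to $\Phi f'$; this reduces to the identity $Sh \circ Sf \circ \alpha_X = Sf' \circ \alpha_Y \circ G(Sh)$, which follows by rewriting $S(h \circ f) = S(f' \circ Fh)$ using functoriality of $S$ and then applying naturality of $\alpha$ at $h$. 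Preservation of identities and composition is then immediate from functoriality of $S$.

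For the second part, assume $\alpha$ is invertible and $S \dashv T$ with unit $\eta$ and counit $\epsilon$. I would first record the defining property of $\beta$: since $\beta_Y$ is the adjoint transpose of $G\epsilon_Y \circ \alpha^{-1}_{TY}$ and transposing back and forth across $S \dashv T$ is the identity, we have
\[ \epsilon_{GY} \circ S\beta_Y \;=\; G\epsilon_Y \circ \alpha^{-1}_{TY}. \]
This equation is the workhorse of the whole argument. I would then check that $\Psi g = Tg \circ \beta_Y : FTY \ra TY$ is a genuine $F$-algebra, and that $\Psi$ acts functorially on morphisms via $T$, using naturality of $\beta$ together with functoriality of $T$.

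The heart of the proof is to show $\Phi \dashv \Psi$ by exhibiting a natural bijection $\Alge{G}(\Phi f, g) \cong \Alge{F}(f, \Psi g)$. The underlying bijection is the adjunction isomorphism $\A(SX, Y) \cong \B(X, TY)$ sending $k : SX \ra Y$ to $j = Tk \circ \eta_X$, with inverse $k = \epsilon_Y \circ Sj$; the only thing to verify is that it restricts to algebra morphisms. Concretely, $k$ is a $G$-algebra morphism $\Phi f \ra g$ iff $g \circ Gk = k \circ Sf \circ \alpha_X$, while $j$ is an $F$-algebra morphism $f \ra \Psi g$ iff $Tg \circ \beta_Y \circ Fj = j \circ f$. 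I would prove these two conditions equivalent by transposing the second across $S \dashv T$: applying $\epsilon_Y \circ S(-)$ to both sides, the right-hand side becomes $k \circ Sf$, while the left-hand side becomes $g \circ \epsilon_{GY} \circ S\beta_Y \circ SFj$ by naturality of $\epsilon$ at $g$; substituting the displayed identity for $\beta$ and then using naturality of $\alpha^{-1}$ at $j$ rewrites this as $g \circ Gk \circ \alpha^{-1}_X$. Cancelling the isomorphism $\alpha_X$ yields exactly the $G$-algebra-morphism condition for $k$. Naturality of the resulting bijection in $f$ and $g$ is then inherited from naturality of the adjunction isomorphism for $S \dashv T$, and $\Psi$ extends uniquely to the desired right adjoint.

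The step I expect to be the main obstacle is precisely this transposition calculation: it is where invertibility of $\alpha$ is essential (to cancel $\alpha_X$) and where the definition of $\beta$ as the mate of $G\epsilon \circ \alpha^{-1}_T$ is forced upon us, since any other choice would fail to make the adjoint transpose carry algebra morphisms to algebra morphisms. Keeping the three naturality squares — for $\epsilon$, for $\alpha^{-1}$, and the transpose identity packaged in the displayed equation for $\beta$ — correctly oriented is the one genuinely delicate part; everything else is routine functoriality bookkeeping.
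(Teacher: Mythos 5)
Your proposal is correct and takes essentially the same route as the paper, which states Theorem~\ref{thm:adj} without proof (citing~\cite{hj98}) and instead proves the instance $S=K_1$, $T=\{-\}$, $G=\hat{F}$ in Lemmas~\ref{lem:Phi} and~\ref{lem:Psi}: there, too, the heart of the argument is transposing the $F$-algebra-morphism condition across the adjunction and matching it against the $\hat{F}$-algebra-morphism condition. Your displayed identity $\epsilon_{GY}\circ S\beta_Y = G\epsilon_Y\circ\alpha^{-1}_{TY}$ and the subsequent cancellation of $\alpha_X$ are the direct generalisation of the chain of equalities culminating in Equation~\ref{eqn:aux} in the proof of Lemma~\ref{lem:Psi}, so the two arguments coincide up to the level of generality at which they are phrased.
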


\noindent
We can instantiate Theorem~\ref{thm:adj} to generalise
Lemmas~\ref{lem:Phi} and~\ref{lem:Psi}.

\begin{thm}\label{thm:genpsiphi}
  Let $U:\E\to\B$ be a comprehension category with unit and
  $F:\B\to\B$ be a functor. If $F$ has a truth-preserving lifting
  $\hat{F}$ then there is an adjunction $\Phi \dashv \Psi:\Alge{F}
  \rightarrow \Alge{\hat{F}}$. Moreover, if $f: FX \ra X$ then $\Phi f
  : \hat{F}(K_1X) \ra K_1X$, and if $g : \hat{F}P \ra P$ then $\Psi g
  : F\{P\} \ra \{P\}$.
\end{thm}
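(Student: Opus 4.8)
The plan is to obtain Theorem~\ref{thm:genpsiphi} as a direct instantiation of the abstract Theorem~\ref{thm:adj}. The key observation is that a truth-preserving lifting is exactly the data needed to apply the abstract theorem with the right choice of functors. Specifically, I would set $S = K_1 : \B \to \E$ (the truth functor), so that the abstract $\A$ becomes $\E$ and the role of $G$ is played by the lifting $\hat{F}$. The truth-preserving condition $\hat{F}(K_1 X) \cong K_1(FX)$ then supplies precisely the natural isomorphism $\alpha : \hat{F} K_1 \to K_1 F$ (i.e.\ $G S \cong S F$) required by the hypotheses of Theorem~\ref{thm:adj}.

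First I would verify that the truth-preservation isomorphisms assemble into a \emph{natural} isomorphism, not merely a family of fibrewise isomorphisms. This is the one place where care is needed: truth-preservation is stated objectwise as $\hat{F}(K_1X) \cong K_1(FX)$, and I must check that these isomorphisms are natural in $X$, using the fact (from Lemma~\ref{lem:k1frau} and the surrounding discussion) that $K_1$ is full and faithful with $UK_1 = \mathit{Id}$, and that $\hat{F}$ is a lifting so $U\hat{F} = FU$. Once $\alpha$ is established as a natural isomorphism, the first half of Theorem~\ref{thm:adj} immediately yields the functor $\Phi : \Alge{F} \to \Alge{\hat{F}}$ given on objects by $\Phi(f : FX \to X) = \hat{F}$ applied appropriately composed with $\alpha_X$; concretely $\Phi f = K_1 f \circ \alpha_X$, whose type is $\hat{F}(K_1X) \to K_1 X$ as claimed.

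Next, to obtain the right adjoint $\Psi$, I would invoke the second half of Theorem~\ref{thm:adj}, which requires $\alpha$ to be an isomorphism (already verified) and a right adjoint $T$ to $S = K_1$. The comprehension-category-with-unit hypothesis (Definition~\ref{def:ccu}) supplies exactly this: the comprehension functor $\{-\} : \E \to \B$ is by definition right adjoint to $K_1$, so I take $T = \{-\}$. The abstract theorem then manufactures the right adjoint $\Psi : \Alge{\hat{F}} \to \Alge{F}$ with $\Psi(g : \hat{F}P \to P) = \{g\} \circ \beta_P$, where $\beta : F\{-\} \to \{-\}\hat{F}$ is the mate of $\hat{F}\epsilon \circ \alpha^{-1}_{\{-\}}$ under the adjunction $\mathrm{Hom}(K_1 X, Y) \cong \mathrm{Hom}(X, \{Y\})$. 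Reading off the type, $\Psi g : F\{P\} \to \{P\}$, exactly as required.

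The main obstacle I anticipate is the naturality verification in the first step. Everything else is formal once the abstract machine is fed the correct inputs, but checking that the objectwise truth-preservation isomorphisms cohere into a natural transformation $\alpha$ demands unwinding how $\hat{F}$ acts on the cartesian morphisms $f^\S_{K_1 X}$ that define $K_1$ on arrows, and confirming that these interact correctly with the terminal-object structure of the fibres. Because the induction rule's soundness ultimately rests on this adjunction, it is worth being explicit that naturality follows from the universal property of the fibred terminal objects $K_1 X$ together with the lifting condition $U\hat{F} = FU$, which pins down the base component of each $\alpha_X$ to be the identity on $FX$ and hence forces the required commuting squares in each fibre.
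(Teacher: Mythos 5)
Your proposal is correct and follows essentially the same route as the paper: instantiate Theorem~\ref{thm:adj} with $\A = \E$, $G = \hat{F}$, $S = K_1$, using truth-preservation to supply the isomorphism $\alpha$ and the comprehension adjunction $K_1 \dashv \{-\}$ to supply $T$. Your extra care in checking that the objectwise isomorphisms $\hat{F}(K_1X)\cong K_1(FX)$ assemble into a \emph{natural} isomorphism is a point the paper's one-line proof passes over silently, but it does not change the structure of the argument.
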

\begin{proof}
  We instantiate Theorem~\ref{thm:adj}, letting $\E$ be $\A$,
  $\hat{F}$ be $G$, and $K_1$ be $S$. Then $\alpha$ is an isomorphism
  since $\hat{F}$ is truth-preserving, and we also have that $K_1
  \dashv \{-\}$. The theorem thus ensures that $\Phi$ maps every
  $F$-algebra $f : FX \ra X$ to an $\hat{F}$-algebra $\Phi f :
  \hat{F}(K_1X) \ra K_1X$, and that $\Psi$ maps every
  $\hat{F}$-algebra $g : \hat{F}P \ra P$ to an $F$-algebra $\Psi g : 
  F\{P\} \ra \{P\}$.
\end{proof}

\begin{cor}
  Let $U:\E\to\B$ be a comprehension category with unit and
  $F:\B\to\B$ be a functor which has a truth-preserving lifting
  $\hat{F}$. If $F$ is inductive, then so is $\hat{F}$. Moreover, $\mu
  \hat{F} = K_1(\mu F)$.
\end{cor}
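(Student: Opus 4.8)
The plan is to run the abstract analogue of the argument already used for the families fibration in Corollary~\ref{cor:fixed-point}. First I would unwind what ``$F$ is inductive'' means: it says precisely that the category $\Alge{F}$ has an initial object, namely the initial $F$-algebra $\mathit{in} : F(\mu F) \to \mu F$ with carrier $\mu F$. Since $U$ is a comprehension category with unit and $\hat{F}$ is a truth-preserving lifting of $F$, the hypotheses of Theorem~\ref{thm:genpsiphi} are met, and that theorem hands us an adjunction $\Phi \dashv \Psi : \Alge{F} \to \Alge{\hat{F}}$ in which $\Phi : \Alge{F} \to \Alge{\hat{F}}$ is the left adjoint.

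The key step is then the standard fact that left adjoints preserve colimits, and in particular initial objects. Applying $\Phi$ to the initial object $\mathit{in}$ of $\Alge{F}$ therefore produces an initial object $\Phi\,\mathit{in}$ of $\Alge{\hat{F}}$. Because an initial object of $\Alge{\hat{F}}$ is by definition an initial $\hat{F}$-algebra, this already establishes that $\hat{F}$ is inductive. For the final equation I would simply read off the carrier from the typing information supplied by Theorem~\ref{thm:genpsiphi}: instantiated at $f = \mathit{in} : F(\mu F) \to \mu F$, that theorem tells us $\Phi\,\mathit{in} : \hat{F}(K_1(\mu F)) \to K_1(\mu F)$. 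Hence the carrier of the initial $\hat{F}$-algebra is $K_1(\mu F)$, which is exactly the assertion $\mu\hat{F} = K_1(\mu F)$.

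I do not expect any genuine obstacle, as all the substantive labour has already been discharged in Theorems~\ref{thm:adj} and~\ref{thm:genpsiphi}; the corollary is a direct consequence of preservation of initial objects by a left adjoint. The only point meriting a word of care is to confirm that the object $\Phi\,\mathit{in}$ is genuinely \emph{the} initial $\hat{F}$-algebra, rather than merely some $\hat{F}$-algebra over $K_1(\mu F)$. This follows at once, since initiality is a purely categorical property preserved by $\Phi$, and initiality in $\Alge{\hat{F}}$ is by definition initiality of the corresponding $\hat{F}$-algebra.
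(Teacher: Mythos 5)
Your proof is correct and follows exactly the paper's own argument: both invoke Theorem~\ref{thm:genpsiphi} to obtain the adjunction $\Phi \dashv \Psi$, use the fact that left adjoints preserve initial objects, and read off the carrier $K_1(\mu F)$ from the typing of $\Phi$ applied to the initial $F$-algebra. No gaps.
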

\begin{proof}
  The hypotheses of the corollary place us in the setting of
  Theorem~\ref{thm:genpsiphi}. This theorem guarantees that $\Phi$
  maps the initial $F$-algebra $\mathit{in}_F : F(\mu F) \ra \mu F$ to
  an $\hat{F}$-algebra with carrier $K_1(\mu F)$.  But since left
  adjoints preserve initial objects, we must therefore have that the
  initial $\hat{F}$-algebra has carrier $K_1(\mu F)$. Thus, $\mu
  \hat{F}$ exists and is isomorphic to $K_1(\mu F)$.
\end{proof}

\begin{thm}\label{thm:induction}
  Let $U:\E\to\B$ be a comprehension category with unit and
  $F:\B\to\B$ be an inductive functor. If $F$ has a truth-preserving
  lifting $\hat{F}$, then the following generic fibrational induction
  rule is sound:
\[\begin{array}{lll}
  \mathit{genfibind} & : & \forall \, (F : \B \rightarrow \B) \; (P
  :\E). \; (\hat{F}\,P \ra P) \to (\mu \hat{F} \ra P)\\
  \mathit{genfibind} \, F \, P & = & \mathit{fold} 
\end{array}\]
\end{thm}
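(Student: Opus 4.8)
The plan is to lean on the corollary that immediately precedes the theorem, which already supplies the essential ingredient. First I would invoke that corollary: since $U$ is a comprehension category with unit, $F$ is inductive, and $\hat{F}$ is a truth-preserving lifting of $F$, the corollary guarantees that $\hat{F}$ is itself inductive with $\mu\hat{F} \cong K_1(\mu F)$. Consequently the initial $\hat{F}$-algebra $\mathit{in}_{\hat F} : \hat{F}(\mu\hat{F}) \ra \mu\hat{F}$ exists, so the iteration operator $\mathit{fold}$ for $\hat{F}$ is defined: by initiality, every $\hat{F}$-algebra $g : \hat{F}P \ra P$ determines a unique $\hat{F}$-algebra morphism $\mathit{fold}\,g$ from $\mathit{in}_{\hat F}$ to $g$, whose underlying $\E$-morphism has type $\mu\hat{F} \ra P$. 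This already shows that $\mathit{genfibind} = \mathit{fold}$ is well defined and produces a term of exactly the claimed type, so the premises of the rule indeed yield its conclusion.

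To argue that this conclusion is a genuine proof, and not merely a morphism of the right type, I would analyse the image of $\mathit{fold}\,g$ under $U$. The first step is to check that $U$ carries the initial $\hat{F}$-algebra to the initial $F$-algebra. By Theorem~\ref{thm:genpsiphi} the initial $\hat{F}$-algebra is $\Phi\,\mathit{in}_F = K_1(\mathit{in}_F)\circ\alpha_{\mu F}$, where $\alpha$ is the truth-preservation isomorphism; since $\alpha_{\mu F}$ is vertical (it lies in the fibre over $F(\mu F)$) and $UK_1 = \mathit{Id}$, applying $U$ gives $U(\mathit{in}_{\hat F}) = \mathit{in}_F$. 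As $U$ is a functor it then sends the $\hat{F}$-algebra morphism $\mathit{fold}\,g$ to an $F$-algebra morphism from $\mathit{in}_F$ to $Ug : F(UP)\ra UP$, and initiality of $\mathit{in}_F$ forces this base morphism to equal $\mathit{fold}\,(Ug)$. Hence $U(\mathit{fold}\,g) = \mathit{fold}\,(Ug)$, the fibrational counterpart of the identity $\phi = \mathit{fold}\,k$ that justified soundness concretely in Section~\ref{sec:essence}. Specialising to the case where $P$ lies over $\mu F$ and $g$ lies over $\mathit{in}_F$, this reads $U(\mathit{fold}\,g) = \mathit{fold}\,\mathit{in}_F = \mathit{id}_{\mu F}$, so $\mathit{fold}\,g$ is a vertical morphism from $K_1(\mu F)$ to $P$, which by the axiomatic reading of truth is precisely a witness that $P$ holds throughout $\mu F$.

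I expect the only real subtlety to be the claim $U(\mathit{in}_{\hat F}) = \mathit{in}_F$, since this is exactly the point at which the fibrational hypotheses are used: that the counit of $U \dashv K_1$ is the identity (so $UK_1 = \mathit{Id}$) and that the truth-preservation isomorphism $\alpha$ is vertical. Once that is in hand, everything else is a direct transcription into the fibrational setting of the reduction of induction to iteration already performed for the families fibration in Section~\ref{sec:essence}.
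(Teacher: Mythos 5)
Your proposal is correct and follows essentially the same route as the paper: the paper treats this theorem as an immediate consequence of the preceding corollary, since inductiveness of $\hat{F}$ (with $\mu\hat{F}=K_1(\mu F)$) is exactly what makes $\mathit{fold}$ for $\hat{F}$ well defined at the stated type. Your second paragraph --- verifying $U(\mathit{in}_{\hat F})=\mathit{in}_F$ and $U(\mathit{fold}\,g)=\mathit{fold}\,(Ug)$, hence verticality of $\mathit{fold}\,g$ when $g$ lies over $\mathit{in}_F$ --- is a sound and worthwhile elaboration that the paper only carries out explicitly in the concrete setting of Section~\ref{sec:essence}.
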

\noindent
An alternative presentation of $\mathit{genfibind}$ is
\[\begin{array}{lll}
  \mathit{genfibind} & : & \forall \, (F : \B \rightarrow \B) \; (P
  :\E). \; (\hat{F}\,P \ra P) \to (\mu F \ra \{P\})\\
  \mathit{genfibind} \, F \, P & = & \mathit{fold} \circ \Psi
\end{array}\]
\noindent
We call $\mathit{genfibind}\,F$ the {\em generic fibrational induction
  rule} for $\mu F$.

In summary, we have generalised the generic induction rule for
predicates over $\Set$ presented in Section~\ref{sec:essence} to give
a sound generic induction rule for comprehension categories with
unit. Our only assumption is that if we start with an inductive
functor $F$ on the base of the comprehension category, then there must
be a truth-preserving lifting of that functor to the total category of
the comprehension category. In that case, we can specialise
$\mathit{genfibind}$ to get a fibrational induction rule for any
datatype $\mu F$ that can be interpreted in the fibration's base
category.

The generic fibrational induction rule $\mathit{genfibind}$ does,
however, look slightly different from the generic induction rule for
set-valued predicates. This is because, in Section~\ref{sec:essence},
we used our knowledge of the specific structure of comprehensions for
set-valued predicates to extract proofs for particular data elements
from them. But in the fibrational setting, predicates, and hence
comprehensions, are left abstract. We therefore take the return type
of the general induction scheme $\mathit{genfibind}$ to be a
comprehension with the expectation that, when the general theory of
this section is instantiated to a particular fibration of interest, it
may be possible to use knowledge about that fibration to extract from
the comprehension constructed by $\mathit{genfibind}$ further proof
information relevant to the application at hand. 

As we have previously mentioned, Hermida and Jacobs provide
truth-preserving liftings only for polynomial functors. In
Section~\ref{sec:liftings}, we define a generic truth-preserving
lifting for \emph{any} inductive functor on the base category of any
fibration which, in addition to being a comprehension category with
unit, has left adjoints to all reindexing functors. This gives a sound
generic fibrational induction rule for the datatype $\mu F$ for any
functor $F$ on the base category of any such fibration.

\subsection{Constructing Truth-Preserving Liftings}\label{sec:liftings}

In light of the previous subsection, it is natural to ask whether or
not truth-preserving liftings exist. If so, are they unique? Or, if
there are many truth-preserving liftings, is there a specific
truth-preserving lifting to choose above others? Is there, perhaps,
even a universal truth-preserving lifting? We can also ask about the
algebraic structure of liftings. For example, do truth-preserving
liftings preserve sums and products of functors?

Answers to some of these questions were given by Hermida and Jacobs,
who provided truth-preserving liftings for polynomial functors. To
define such liftings they assume that the total category and the base
category of the fibration in question have products and coproducts,
and that the fibration preserves them. Under these conditions,
liftings for polynomial functors can be defined inductively. In this
section we go beyond the results of Hermida and Jacobs and construct
truth-preserving liftings for {\em all} inductive functors. We employ
a two-stage process, first building truth-preserving liftings under
the assumption that the fibration of interest is a codomain fibration,
and then using the intuitions of Section~\ref{sec:essence} to extend
this lifting to a more general class of fibrations. In
Section~\ref{sec:algebra} we consider the questions from the previous
paragraph about the algebraic structure of liftings.

\subsubsection{Truth-Preserving Liftings for Codomain Fibrations}

Recall from Example~\ref{ex:cod} that if $\B$ has pullbacks, then the
codomain fibration over $\B$ is the functor $\mathit{cod}:\B^\ra \ra
\B$.  Given a functor $F:\B \ra \B$, it is trivial to define a lifting
$F^\ra:\B^\ra \ra \B^\ra$ for this fibration. We can define the
functor $F^\ra$ to map an object $f:X \ra Y$ of $\B^\ra$ to $Ff: FX
\ra FY$, and to map a morphism $(\alpha_1,\alpha_2)$ to the morphism
$(F \alpha_1, F \alpha_2)$. That $F^\ra$ is a lifting is easily
verified.

If we further verify that codomain fibrations are comprehension
categories with unit, and that the lifting $F^\ra$ is
truth-preserving, then Theorem~\ref{thm:induction} can be applied to
them. For the former, we first observe that the functor $K_1 : \B \ra
\B^\ra$ mapping an object $X$ to $\mathit{id}$ and a morphism $f : X
\ra Y$ to $(f,f)$ is a truth functor for this fibration. (In fact, we
can take any isomorphism into $X$ as $K_1X$; we will use this
observation below.) If we let $\B^\to(U,V)$ denote the set of
morphisms from an object $U$ to an object $V$ in $\B^\to$, then the
fact that $K_1$ is right adjoint to $\mathit{cod}$ can be established
via the natural isomorphism
\[\B^\ra(f:X \ra Y,\, K_1Z) = \{ (\alpha_1 : X \ra Z,\, \alpha_2 : Y
\ra
Z) \,| \,\alpha_1 = \alpha_2\, \circ\, f \} \cong \B(Y, Z) = \B
(\mathit{cod}\, f,\, Z)\]

We next show that the functor $\mathit{dom}:\B^\ra \ra \B$ mapping an
object $f: X \ra Y$ of $\B^\ra$ to $X$ and a morphism $(\alpha_1,
\alpha_2)$ to $\alpha_1$ is a comprehension functor for the codomain
fibration. That $\mathit{dom}$ is right adjoint to $K_1$ is
established via the natural isomorphism
\[\B^\ra(K_1Z,\, f: X \ra Y) = \{ (\alpha_1 : Z \ra X, \alpha_2 : Z
\ra Y)\, |\, \alpha_2 = f\, \circ\, \alpha_1\} \cong \B(Z, X) = \B (Z,
\mathit{dom} \, f) \]

Finally, we have that $F^\ra$ is truth-preserving because
\[F^\ra (K_1 Z) = F^\ra\, \mathit{id} = F\, \mathit{id} = \mathit{id}
= K_1 (F Z)\]

A lifting is implicitly given in~\cite{men91} for functors on a
category with display maps. Such a category is a subfibration of the
codomain fibration over that category, and the lifting given there is
essentially the lifting for the codomain fibration restricted to the
subfibration in question.

\subsubsection{Truth-Preserving Liftings for the Families Fibration
  over $\Set$}\label{sec:fam}

In Section~\ref{sec:essence} we defined, for every functor $F:\Set \ra
\Set$, a lifting $\hat{F}$ which maps the predicate $P$ to $(F
\pi_P)^{-1}$. Looking closely, we realise this lifting decomposes into
three parts. Given a predicate $P$, we first consider the projection
function $\pi_P : \{P\} \ra UP$. Next, we apply the functor $F$ to
$\pi_P$ to obtain $F \pi_P : F \{P\} \ra FUP$. Finally, we take the
inverse image of $F\pi_P$ to get a predicate over $FUP$ as required.

Note that $\pi$ is the functor from $\mathcal{P}$ to $\Set^\to$ which
maps a predicate $P$ to the projection function $\pi_P : \{P\} \ra UP$
(and maps a predicate morphism $(f,f^\sim)$ from a predicate $P : X
\ra \Set$ to $P' : X' \ra \Set$ to the morphism $(\{(f,f^\sim)\}, \;
f)$ from $\pi_P$ to $\pi_{P'}$;
cf.~Definition~\ref{def:comprehension}). If $I:\Set^\ra \ra
\mathcal{P}$ is the functor sending a function $f:X \ra Y$ to its
``inverse'' predicate $f^{-1}$ (and a morphism $(\alpha_1,\alpha_2)$
to the predicate morphism $(\alpha_2, \forall y:Y.\; \lambda x:
f^{-1}y. \; \alpha_1 x)$), then each of the three steps of defining
$\hat{F}$ is functorial and the relationships indicated by the
following diagram hold:
\[\xymatrix{{\mathcal{P}}\ar[dr]_U \ar@/^/[rr]^\pi \ar@{}[rr]|\top & &
  \ar@/^/[ll]^I \ar[ld]^{cod} {\Set^\to}\\ & {\Set} & }\]

\vspace*{0.1in}

Note that the adjunction $I \dashv \pi$ is an equivalence. This
observation is not, however, necessary for our subsequent development;
in particular, it is not needed for Theorem~\ref{thm:tp}.

The above presentation of the lifting $\hat{F}$ of a functor $F$ for
the families fibration over $\Set$ uses the lifting of $F$ for the
codomain fibration over $\Set$. Indeed, writing $F^\to$ for the
lifting of $F$ for the codomain fibration over $\Set$, we have that
$\hat{F} = I F^\to \pi$. Moreover, since $\pi$ and $I$ are
truth-preserving (see the proof of Lemma~\ref{lem:aux}), and since we
have already seen that liftings for codomain fibrations are
truth-preserving, we have that $\hat{F}$ is truth-preserving because
each of its three constituent functors is. Finally, since we showed in
Section~\ref{sec:essence} that the families fibration over $\Set$ is a
comprehension category with unit, Theorem~\ref{thm:induction} can be
applied to it.

Excitingly, as we shall see in the next subsection, the above
presentation of the lifting of a functor for the families fibration
over $\Set$ generalises to many other fibrations!

\subsubsection{Truth-Preserving Liftings for Other
  Fibrations}\label{sec:general}

We now turn our attention to the task of constructing truth-preserving
liftings for fibrations other than codomain fibrations and the
families fibration over $\Set$. By contrast with the approach outlined
in the conference paper~\cite{gjf10} on which this paper is based, the
one we take here uses a factorisation, like that of the previous
subsection, through a codomain fibration. More specifically, let $U:\E
\ra \B$ be a comprehension category with unit. We first define
functors $I$ and $\pi$, and construct an adjunction $I \dashv \pi$
between $\E$ and $\B^\ra$ such that the relationships indicated by the
following diagram hold:
\[\xymatrix{{\E}\ar[dr]_U \ar@/^/[rr]^\pi \ar@{}[rr]|\top & &
   \ar@/^/[ll]^I \ar[ld]^{cod} {\B^\to}\\ & {\B} & }\]
\noindent
We then use the adjunction indicated in the diagram to construct
truth-preserving a lifting for $U$ from that for the codomain
fibration over $\B$.

To define the functor $\pi:\E \ra \B^\ra$ we generalise the definition
of $\pi : \mathcal{P} \to \Set^\to$ from Sections~\ref{sec:essence}
and~\ref{sec:fam}. This requires us to work with the axiomatic
characterisation in Definition~\ref{def:ccu} of the comprehension
functor $\{-\} : \E \to \B$ as the right adjoint to the truth functor
$K_1 : \B \to \E$.  The counit of the adjunction $K_1 \dashv \{-\}$ is
a natural transformation $\epsilon:K_1 \{-\} \ra Id$. Applying $U$ to
$\epsilon$ gives the natural transformation $U \epsilon : U K_1 \{-\}
\ra U$, but since $U K_1 = Id$, in fact we have that $U \epsilon :
\{-\} \ra U$. We can therefore define $\pi$ to be $U \epsilon$. Then
$\pi$ is indeed a functor from $\E$ to $\B^\to$, its action on an
object $P$ is $\pi_P$, and its action on a morphism $(f,f^\sim)$ is
$(\{(f,f^\sim)\},\,f)$.

We next turn to the definition of the left adjoint $I$ to $\pi$.  To
see how to generalise the inverse image construction to more general
fibrations we first recall from Example~\ref{ex:famfib} that, if $f :
X \to Y$ is a function and $P : Y \to \Set$, then $f^* P = P \circ
f$. We can extend this mapping to a reindexing functor
$f^*:\E_Y\to\E_X$ by defining $f^* (\mathit{id},h^\sim) =
(\mathit{id}, h^\sim \circ f)$. If we define the action of $\Sigma_f
: \E_X \ra \E_Y$ on objects by
\[\Sigma_f P \;=\; \lambda y.\ \biguplus_{\{x|f\,x = y\}} P\,x \]
where $\biguplus$ denotes the disjoint union operator on sets, and its
action on morphisms by taking $\Sigma_f\, (\mathit{id}, \alpha^\sim)$
to be $(\mathit{id},\; \forall (y:Y).\; \lambda (x:X,p:fx=y,t:Px).\,
(x,p,\alpha^\sim \,x \,t))$, then $\Sigma_f$ is left adjoint to
$f^*$. Moreover, if we compute
\[\Sigma_f\, (K_1\,X) \;=\; \lambda y.\ \biguplus_{\{x\,|\,f\,x = y\}} K_1Xx\]
and recall that, for any $x:X$, the set $K_1\,X\,x$ is a singleton,
then $\Sigma_f \, (K_1X)$ is clearly equivalent to the inverse image
of $f$.

The above discussion suggests that, in order to generalise the inverse
image construction to a more general fibration $U : \E \to \B$, we
should require each reindexing functor $f^*$ to have the opreindexing
functor $\Sigma_f$ as its left adjoint.  As in~\cite{hj98}, no
Beck-Chevalley condition is required on these adjoints. The following
result, which appears as Proposition~2.3 of~\cite{jac93}, thus allows
us to isolate the exact class of fibrations for which we will have
sound generic induction rules.

\begin{thm}
  A fibration $U : \E \to \B$ is a bifibration iff for every morphism
  $f$ in $\B$ the reindexing functor $f^*$ has left adjoint
  $\Sigma_f$.
\end{thm}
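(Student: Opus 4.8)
The plan is to prove both implications purely from the universal properties of cartesian and opcartesian morphisms, together with one auxiliary fact that I would record first: in any fibration, a morphism $u : P \to R$ lying over $f : X \to Y$ (with $P$ over $X$ and $R$ over $Y$) factors uniquely as $u = f^\S_R \circ \bar u$ with $\bar u : P \to f^*R$ vertical, i.e.\ above $\mathit{id}_X$. This is immediate from the universal property of the cartesian lift $f^\S_R$, and it yields a natural bijection between morphisms $P \to R$ over $f$ and the hom-set $\E_X(P, f^*R)$. This correspondence is the workhorse of both directions.

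For the forward implication, assume $U$ is a bifibration, so for each $f : X \to Y$ we have the reindexing functor $f^* : \E_Y \to \E_X$ and the opreindexing functor $\Sigma_f : \E_X \to \E_Y$. To establish $\Sigma_f \dashv f^*$ I would build, for $P \in \E_X$ and $Q \in \E_Y$, a natural bijection $\E_Y(\Sigma_f P, Q) \cong \E_X(P, f^*Q)$. From a vertical $g : \Sigma_f P \to Q$ the composite $g \circ f_\S^P$ lies over $f$, so the cartesian universal property of $f^\S_Q$ returns a unique vertical $h : P \to f^*Q$ with $f^\S_Q \circ h = g \circ f_\S^P$. Conversely, from a vertical $h$ the opcartesian universal property of $f_\S^P$, applied to $f^\S_Q \circ h$, returns a unique vertical $g$ satisfying the same equation. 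The two assignments are mutually inverse by the uniqueness clauses, and naturality in $P$ and $Q$ is a routine check.

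For the reverse implication, assume $U$ is a fibration and each reindexing functor $f^*$ has a left adjoint $\Sigma_f$ with unit $\eta$. Given $P \in \E_X$ and $f : X \to Y$, I would propose the candidate opcartesian morphism $f_\S^P := f^\S_{\Sigma_f P} \circ \eta_P : P \to \Sigma_f P$, which lies over $f$ because $\eta_P$ is vertical and $f^\S_{\Sigma_f P}$ is above $f$. To verify the opcartesian universal property, take $g' : P \to Q'$ over $v \circ f$. I would factor $g'$ through the cartesian lift $v^\S_{Q'}$ to obtain a map $\tilde g : P \to v^*Q'$ over $f$, convert it by the auxiliary correspondence into a vertical map $\bar g : P \to f^*v^*Q'$, transpose across $\Sigma_f \dashv f^*$ to a vertical map $\hat g : \Sigma_f P \to v^*Q'$, and compose with $v^\S_{Q'}$ to get $h : \Sigma_f P \to Q'$ over $v$. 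Since every step is a bijection, $h$ is the unique candidate over $v$.

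The main obstacle is confirming that this $h$ actually satisfies $h \circ f_\S^P = g'$. Unwinding the composite $v^\S_{Q'} \circ \hat g \circ f^\S_{\Sigma_f P} \circ \eta_P$ forces one to combine three ingredients: the defining equation of $f^*$ on the vertical morphism $\hat g$ (namely $\hat g \circ f^\S_{\Sigma_f P} = f^\S_{v^*Q'} \circ f^* \hat g$), the transpose identity $\bar g = f^*\hat g \circ \eta_P$ of the adjunction, and the two cartesian factorizations defining $\tilde g$ and $\bar g$. Carrying these through collapses the composite to $g'$; the delicate part is the bookkeeping of which morphisms are vertical and which lie over $f$ or $v$ at each stage. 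Everything else---the hom-set equivalences and the naturality squares---is formal once the auxiliary correspondence is in hand.
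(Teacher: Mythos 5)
Your proof is correct, but there is nothing in the paper to compare it against: the authors state this result without proof, quoting it as Proposition~2.3 of Jacobs~\cite{jac93}. Your argument is the standard one for this fact. Both directions are sound: the forward direction correctly chains the cartesian and opcartesian universal properties to exhibit the natural bijection $\E_Y(\Sigma_f P, Q)\cong\E_X(P,f^*Q)$, and in the reverse direction your candidate $f_\S^P = f^\S_{\Sigma_f P}\circ\eta_P$ is the right one. The verification you flag as delicate does go through: the defining equation of $f^*$ on the vertical morphism $\hat g$ gives $\hat g\circ f^\S_{\Sigma_f P} = f^\S_{v^*Q'}\circ f^*\hat g$, and combining this with the transpose identity $f^*\hat g\circ\eta_P=\bar g$ collapses $v^\S_{Q'}\circ\hat g\circ f^\S_{\Sigma_f P}\circ\eta_P$ to $v^\S_{Q'}\circ f^\S_{v^*Q'}\circ\bar g = v^\S_{Q'}\circ\tilde g = g'$. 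Since each stage of your chain is a bijection and you have shown that the composite bijection is inverted by precomposition with $f_\S^P$, that precomposition is itself a bijection from morphisms over $v$ to morphisms over $v\circ f$, which is exactly the opcartesian universal property.
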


\begin{defi}\label{def:lawvere}
  A {\em Lawvere category} is a bifibration which is also a
  comprehension category with unit.
\end{defi}

We construct the left adjoint $I:\B^\to \to \E$ of $\pi$ for any
Lawvere category $U:\E \ra \B$ as follows. If $f : X \to Y$ is an
object of $\B^\to$, i.e., a morphism of $\B$, then we define $I\,f$ to
be the object $\Sigma_f (K_1 X)$ of $\E$. To define the action of $I$
on morphisms, let $(\alpha_1, \alpha_2)$ be a morphism in $\B^\to$
from $f:X\to Y$ to $f':X'\to Y'$ in $\B^\to$. Then $(\alpha_1,
\alpha_2)$ is a pair of morphisms in $\B$ such that the following
diagram commutes:
\[ \xymatrix{ X \;\; \ar[r]^{\alpha_1} \ar[d]_{f} &
    X' \ar[d]^{f'} \\
    Y \ar[r]_{\alpha_2} & Y'}
\]
We must construct a morphism from $\Sigma_f (K_1 X)$ to
$\Sigma_{f'}(K_1 X')$ in $\E$. To do this, notice that
$f'^{K_1X'}_\S\circ K_1 \alpha_1 : K_1 X\to \Sigma_{f'} (K_1X')$ is
above $f'\circ \alpha_1$, and that it is also above $\alpha_2\circ f$
since $f'\circ \alpha_1=\alpha_2\circ f$. We can then consider the
morphism $f'^{K_1X'}_\S\circ K_1 \alpha_1$ and use the universal
property of the opcartesian morphism $f^{K_1 X}_\S$ to deduce the
existence of a morphism $h:\Sigma_f (K_1 X)\to\Sigma_{f'}(K_1 X')$
above $\alpha_2$.  It is not difficult, using the uniqueness of the
morphism $h$, to prove that setting this $h$ to be the image of the
morphism $(\alpha_1,\alpha_2)$ makes $I$ a functor.  In fact, since
$\mathit{cod} \circ \pi = U$, Result (i) on page 190 of~\cite{jac93}
guarantees that, for any Lawvere category $U:\E\to\B$ the functor
$I:\B^\to\to\E$ exists and is left adjoint to $\pi:\E\to\B^\to$.

\vspace*{0.1in}

We can now construct a truth-preserving lifting for any Lawvere
category $U:\E \ra \B$ and functor $F$ on $\B$.

\begin{thm}\label{thm:tp}
  Let $U:\E\to\B$ be a Lawvere category and, for any functor $F$ on $\B$,
  define the functor $\hat{F}$ on $\E$ by
\[\begin{array}{lll}
\hat{F} & : & \E\to\E\\
\hat{F} & = & IF^\to\pi
\end{array}\]
Then $\hat{F}$ is a truth-preserving lifting of $F$.
\end{thm}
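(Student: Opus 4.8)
The statement asks for two things: that $\hat F = IF^\to\pi$ is a lifting of $F$, i.e.\ that $U\hat F = FU$, and that it is truth-preserving, i.e.\ that $\hat F(K_1 X)\cong K_1(FX)$ for every object $X$ of $\B$. The plan is to dispatch the lifting property by a short chain of functor identities, and then to prove truth-preservation by factoring it through the three constituent functors $\pi$, $F^\to$, and $I$, each of which preserves the appropriate notion of truth.

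For the lifting property I would first record three equalities. Since $I f = \Sigma_f(K_1 X)$ lands in the fibre $\E_Y$ for $f : X \to Y$, we have $UI = \mathit{cod}$; since $F^\to$ sends $f : X \to Y$ to $Ff : FX \to FY$, we have $\mathit{cod}\,F^\to = F\,\mathit{cod}$; and the adjunction diagram constructed in Section~\ref{sec:general} gives $\mathit{cod}\,\pi = U$. Composing these yields $U\hat F = UIF^\to\pi = \mathit{cod}\,F^\to\pi = F\,\mathit{cod}\,\pi = FU$, so $\hat F$ is indeed a lifting of $F$.

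For truth-preservation I would exploit that the truth functor of the codomain fibration over $\B$ sends an object $X$ to $\mathit{id}_X$ (indeed any isomorphism into $X$ serves as a truth object there). Reading off the three stages of $\hat F$, it then suffices to establish the three isomorphisms $\pi(K_1 X)\cong\mathit{id}_X$, $F^\to(\mathit{id}_X) = \mathit{id}_{FX}$, and $I(\mathit{id}_{FX})\cong K_1(FX)$. The second is immediate from the definition of $F^\to$. The third is also immediate: $I(\mathit{id}_{FX}) = \Sigma_{\mathit{id}_{FX}}(K_1(FX))$, and opreindexing along an identity is the identity functor, so this equals $K_1(FX)$.

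The hard part will be the first stage, $\pi(K_1 X)\cong\mathit{id}_X$, that is, showing the projection $\pi_{K_1 X}:\{K_1 X\}\to X$ is an isomorphism. In the families fibration this was the ``only one proof'' argument of Lemma~\ref{lem:aux}, but here I would argue purely from the adjunction $K_1\dashv\{-\}$. Since $K_1$ is full and faithful (as noted after Lemma~\ref{lem:k1frau}), the unit $\eta_X : X \to \{K_1 X\}$ of this adjunction is an isomorphism. Applying $U$ to the triangle identity $\epsilon_{K_1 X}\circ K_1\eta_X = \mathit{id}_{K_1 X}$, and using both $UK_1 = \mathit{Id}$ and the defining equation $\pi = U\epsilon$, gives $\pi_{K_1 X}\circ\eta_X = \mathit{id}_X$, so $\pi_{K_1 X} = \eta_X^{-1}$ is invertible and thus $\pi_{K_1 X}$ is an isomorphism into $X$. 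Hence $\pi(K_1 X)\cong\mathit{id}_X$ in $\B^\to$, and chaining the three isomorphisms yields $\hat F(K_1 X) = IF^\to\pi(K_1 X)\cong IF^\to(\mathit{id}_X) = I(\mathit{id}_{FX})\cong K_1(FX)$, completing the argument.
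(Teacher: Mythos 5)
Your proposal is correct and follows essentially the same route as the paper's proof: the paper likewise dismisses the lifting equation $U\hat F = FU$ as routine and establishes truth-preservation by showing that each of $\pi$, $F^\to$, and $I$ preserves fibred terminal objects, with the crux being exactly your argument that $\pi_{K_1 X}=U\epsilon_{K_1 X}$ is an isomorphism with inverse $\eta_X$, derived from $UK_1=\mathit{Id}$ and the full faithfulness of $K_1$. The only cosmetic difference is at the last stage, where the paper notes that $\Sigma_f$ is also a right adjoint of $f^*$ for any isomorphism $f$ and hence preserves terminal objects, whereas you reduce to $\Sigma_{\mathit{id}}\cong\mathit{Id}$; both are fine.
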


\begin{proof}
  It is trivial to check that $\hat{F}$ is indeed a lifting. To prove
  that it is truth-preserving, we need to prove that $\hat{F}(K_1 X)
  \cong K_1 (F X)$ for any functor $F$ on $\B$ and object $X$ of
  $\B$. We do this by showing that each of $\pi$, $F^\to$, and $I$
  preserves fibred terminal objects, i.e., preserves the terminal
  objects of each fibre of the total category which is its domain.
  Then since $K_1X$ is a terminal object in the fibre $\E_X$, we will
  have that $\hat{F}(K_1X) = I(F^\to(\pi(K_1X)))$ is a terminal object
  in $\E_{FX}$, i.e., that $\hat{F}(K_1X) \cong K_1(FX)$ as desired.

  We first show that $\pi$ preserves fibred terminal objects. We must
  show that, for any object $X$ of $\B$, $\pi_{K_1 X}$ is a terminal
  object of the fibre of $\B^\to$ over $X$, i.e., is an isomorphism
  with codomain $X$. We prove this by observing that, if $\eta :
  \mathit{Id} \to \{-\}K_1$ is the unit of the adjunction $K_1 \dashv
  \{-\}$, then $\pi_{K_1X}$ is an isomorphism with inverse
  $\eta_X$. Indeed, if $\epsilon$ is the counit of the same
  adjunction, then the facts that $UK_1 = \mathit{Id}$ and that $K_1$
  is full and faithful ensure that $K_1\eta_X$ is an isomorphism with
  inverse $\epsilon_{K_1X}$. Thus, $\epsilon_{K_1X}$ is an isomorphism
  with inverse $K_1\eta_X$, and so $\pi_{K_1X} = U \epsilon_{K_1X}$ is
  an isomorphism with inverse $UK_1\eta_X$, i.e., with inverse
  $\eta_X$. Since $K_1X$ is a terminal object in $\E_X$ and
  $\pi_{K_1X}$ is a terminal object in the fibre of $\B^\to$ over $X$,
  we have that $\pi$ preserves fibred terminal objects.

  It is not hard to see that $F^\ra$ preserves fibred terminal
  objects: applying the functor $F$ to an isomorphism with codomain
  $X$ --- i.e., to a terminal object in the fibre of $\B^\to$ over $X$
  --- gives an isomorphism with codomain $FX$ --- i.e., a terminal
  object in the fibre of $\B^\to$ over $FX$.
 
  Finally, if $f: X \ra Y$ is an isomorphism in $\B$, then $\Sigma_f$
  is not only left adjoint to $f^*$, but also right adjoint to
  it. Since right adjoints preserve terminal objects, and since $K_1
  X$ is a terminal object of $\E_X$, we have that $I f = \Sigma_f (K_1
  X)$ is a terminal object of $\E_Y$. Thus $I$ preserves fibred
  terminal objects.
\end{proof}

We stress that, to define our lifting, the codomain functor over the
base $\B$ of the Lawvere category need not be a fibration. In
particular, $\B$ need not have pullbacks; indeed, all that is needed
to construct our generic truth-preserving lifting $\hat{F}$ for a
functor $F$ on $\B$ is the existence of the functors $I$ and $\pi$
(and $F^\to$, which always exists). We nevertheless present the
lifting $\hat{F}$ as the composition of $\pi$, $F^\to$, and $I$
because this presentation shows it can be factored through $F^\to$.
This helps motivate our definition of $\hat{F}$, thereby revealing
parallels between it and $F^\to$ that would otherwise not be
apparent. At the same time it trades the direct, brute-force
presentation of $\hat{F}$ from~\cite{gjf10} for an elegant modularly
structured one which makes good use, in a different setting, of
general results about comprehension categories due to
Jacobs~\cite{jac93}.

We now have the promised sound generic fibrational induction rule for
every inductive functor $F$ on the base of a Lawvere category. To
demonstrate the flexibility of this rule, we now derive an induction
rule for a data type and properties on it that cannot be modelled in
$\Set$. Being able to derive induction rules for fixed points of
functors in categories other than $\Set$ is a key motivation for
working in a general fibrational setting.

\begin{ex}\label{ex:hyperfunctions}
  The fixed point $Hyp = \mu F$ of the functor $FX = (X \rightarrow
  Int) \rightarrow Int$ is the data type of hyperfunctions. Since $F$
  has no fixed point in $\Set$, we interpret it in the category
  $\omega CPO_{\bot}$ of $\omega$-cpos with $\bot$ and strict
  continuous monotone functions. In this setting, a property of an
  object $X$ of $\omega CPO_{\bot}$ is an admissible sub-$\omega
  CPO_{\bot}$ $P$ of $X$. Admissibility means that the bottom element
  of $X$ is in $P$ and $P$ is closed under least upper bounds of
  $\omega$-chains in $X$. This structure forms a Lawvere
  category~\cite{jac93,jac99}; in particular, it is routine to verify
  the existence of its opreindexing functor. In particular, $\Sigma_f
  P$ is constructed for a continuous map $f:X \to Y$ and an admisible
  predicate $P \subseteq X$, as the intersection of all admissible $Q
  \subseteq Y$ with $P \subseteq f^{-1}(Q)$. The truth functor maps
  $X$ to $X$, and comprehension maps a sub-$\omega CPO_{\bot}$ $P$ of
  $X$ to $P$. The lifting $\hat{F}$ maps a sub-$\omega CPO_{\bot}$ $P$
  of $X$ to the least admissible predicate on $FX$ containing the
  image of $FP$. Finally, the derived induction rule states that if
  $P$ is an admissible sub-$\omega CPO_{\bot}$ of $Hyp$, and if
  $\hat{F}(P) \subseteq P$, then $P = Hyp$.
\end{ex}

\subsection{An Algebra of Lifting}\label{sec:algebra}

We have proved that in any Lawvere category $U:\E \ra \B$, any functor
$F$ on $\B$ has a lifting $\hat{F}$ on $\E$ which is truth-preserving,
and thus has the following associated sound generic fibrational
induction rule: 
\[\begin{array}{lll}
  \mathit{genfibind} & : & \forall \, (F : \B \rightarrow \B) \; (P
  :\E). \; (\hat{F}\,P \ra P) \to (\mu F \ra \{P\})\\
  \mathit{genfibind} \, F \, P & = & \mathit{fold} \circ \Psi 
\end{array}\]
In this final subsection of the paper, we ask what kinds of algebraic
properties the lifting operation has. Our first result concerns the
lifting of constant functors.

\begin{lem}
  Let $U:\E \ra \B$ be a Lawvere category and let $X$ be an object of
  $\B$. If $F_X$ is the constantly $X$-valued functor on $\B$, then
  $\widehat{F_X}$ is isomorphic to the constantly $K_1 X$-valued
  functor on $\E$.
\end{lem}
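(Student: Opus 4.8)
The plan is to exploit the factorisation $\widehat{F_X} = I\,(F_X)^\to\,\pi$ from Theorem~\ref{thm:tp} and to observe that, for a constant functor, the middle factor $(F_X)^\to$ collapses to a constant functor on $\B^\to$. First I would record the action of $(F_X)^\to$. Since $F_X$ sends every object of $\B$ to $X$ and every morphism of $\B$ to $\mathit{id}_X$, the lifting $(F_X)^\to$ sends each object $f : A \to B$ of $\B^\to$ to $F_X f = \mathit{id}_X$ and each morphism $(\alpha_1,\alpha_2)$ to $(F_X\alpha_1, F_X\alpha_2) = (\mathit{id}_X,\mathit{id}_X)$, which is precisely the identity morphism on the object $\mathit{id}_X$ of $\B^\to$. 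Thus $(F_X)^\to$ is the constant functor on $\B^\to$ with value $\mathit{id}_X$.

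Next I would use the fact that pre- and post-composing a constant functor with arbitrary functors again yields a constant functor. Since $(F_X)^\to$ is constant at $\mathit{id}_X$, the composite $\widehat{F_X} = I\,(F_X)^\to\,\pi$ is the constant functor on $\E$ with value $I(\mathit{id}_X)$; in particular it sends every object of $\E$ to $I(\mathit{id}_X)$ and every morphism of $\E$ to $\mathit{id}_{I(\mathit{id}_X)}$. It therefore remains only to identify the object $I(\mathit{id}_X)$.

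Recall that $I(\mathit{id}_X) = \Sigma_{\mathit{id}_X}(K_1 X)$ by the definition of $I$ on the Lawvere category $U$. Here the identity $\mathit{id}_X$ is an isomorphism in $\B$, and the final paragraph of the proof of Theorem~\ref{thm:tp} shows that for an isomorphism $f : X \to Y$ the functor $\Sigma_f$ preserves fibred terminal objects, so that $\Sigma_f(K_1 X)$ is terminal in $\E_Y$. Specialising $f$ to $\mathit{id}_X$ gives that $\Sigma_{\mathit{id}_X}(K_1 X)$ is terminal in $\E_X$, and hence $I(\mathit{id}_X) = \Sigma_{\mathit{id}_X}(K_1 X) \cong K_1 X$. (Equivalently, $\mathit{id}_X^* = \mathit{Id}_{\E_X}$, so its left adjoint $\Sigma_{\mathit{id}_X}$ is isomorphic to the identity.) Combining the two observations, $\widehat{F_X}$ is the constant functor on $\E$ with value $I(\mathit{id}_X) \cong K_1 X$, i.e.\ $\widehat{F_X} \cong F_{K_1 X}$ as claimed.

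I expect the only genuine content to lie in the identification $I(\mathit{id}_X) \cong K_1 X$; everything else is the bookkeeping that a constant functor lifts to a constant functor. Naturality of the resulting isomorphism $\widehat{F_X} \cong F_{K_1 X}$ requires no work, since both functors are constant, so the single isomorphism $I(\mathit{id}_X) \cong K_1 X$ serves as every component and the naturality squares commute trivially.
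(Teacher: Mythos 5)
Your proposal is correct and follows essentially the same route as the paper: both unfold $\widehat{F_X}P = I((F_X)^\to\pi_P) = I(\mathit{id}_X) = \Sigma_{\mathit{id}}(K_1X)$ and conclude via $\mathit{id}^*\cong\mathit{Id}$ and $\Sigma_{\mathit{id}}\dashv\mathit{id}^*$ that this is isomorphic to $K_1X$. Your extra remarks on the constancy of $(F_X)^\to$ and on naturality are just explicit bookkeeping that the paper leaves implicit.
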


\begin{proof} For any object $P$ of $\E$ we have
  \[ \widehat{F_X} P = (I (F_X)^\to \pi)P = I(F_X\pi_P) = \Sigma_{F_X
    \pi_P} K_1 F_X \{P\} = \Sigma_{\mathit{id}} K_1 X \cong K_1 X\]
  The last isomorphism holds because $\mathit{id}^* \cong \mathit{Id}$
  and $\Sigma_{\mathit{id}} \dashv \mathit{id}^*$.
\end{proof}

Our next result concerns the lifting of the identity functor. It
requires a little additional structure on the Lawvere category of
interest.

\begin{defi}
  A {\em full Lawvere category} is a Lawvere category $U:\E \ra \B$ such
  that $\pi:\E \ra \B^\ra$ is full and faithful.
\end{defi}

\begin{lem}
  In any full Lawvere category, $\widehat{\mathit{Id}} \cong
  \mathit{Id}$
\end{lem}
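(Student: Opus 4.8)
The plan is to unfold the definition of the lifting and reduce the claim to a standard fact about adjunctions. First I would observe that for $F = \mathit{Id}$ the codomain-fibration lifting $\mathit{Id}^\to$ is literally the identity functor on $\B^\to$: by its defining formula it sends an object $f : X \to Y$ to $\mathit{Id}\, f = f$ and a morphism $(\alpha_1,\alpha_2)$ to $(\mathit{Id}\,\alpha_1,\mathit{Id}\,\alpha_2) = (\alpha_1,\alpha_2)$. Consequently, by Theorem~\ref{thm:tp}, $\widehat{\mathit{Id}} = I\,\mathit{Id}^\to\,\pi = I\pi$, and the whole problem reduces to showing that the composite $I\pi$ is naturally isomorphic to the identity functor on $\E$.

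Next I would invoke the adjunction $I \dashv \pi$ established in Section~\ref{sec:general} (via Result (i) on page 190 of~\cite{jac93}). Its counit is a natural transformation $\varepsilon : I\pi \to \mathit{Id}_\E$, and proving $\widehat{\mathit{Id}} \cong \mathit{Id}$ amounts precisely to proving that this counit is a natural isomorphism. At this point I would appeal to the standard characterisation: for an adjunction $L \dashv R$, the counit $LR \to \mathit{Id}$ is a natural isomorphism if and only if the right adjoint $R$ is full and faithful. Here $L = I$ and $R = \pi$, so the counit $I\pi \to \mathit{Id}_\E$ is an isomorphism exactly when $\pi$ is full and faithful.

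The final step is to note that this is precisely the extra hypothesis built into the definition of a full Lawvere category, in which $\pi : \E \to \B^\to$ is assumed full and faithful. Hence the counit $\varepsilon : I\pi \to \mathit{Id}_\E$ is a natural isomorphism, giving $\widehat{\mathit{Id}} = I\pi \cong \mathit{Id}$, as required.

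I do not expect a serious obstacle here; once the two ingredients are identified the content is essentially bookkeeping. The only points needing care are (i) confirming the orientation of the full-and-faithful criterion --- it is the \emph{right} adjoint whose fullness and faithfulness corresponds to invertibility of the \emph{counit}, and $\pi$ is indeed the right adjoint in $I \dashv \pi$ --- and (ii) checking that the isomorphism $\widehat{\mathit{Id}} \cong \mathit{Id}$ so obtained is natural, which is automatic since it is witnessed by the counit, itself a natural transformation.
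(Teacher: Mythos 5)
Your proposal is correct and follows essentially the same route as the paper's proof: both reduce $\widehat{\mathit{Id}}$ to $I\pi$ (the paper by unfolding $I\,\mathit{Id}^\to\,\pi$ pointwise, you by noting $\mathit{Id}^\to$ is the identity on $\B^\to$) and then conclude via the standard fact that the counit of $I \dashv \pi$ is an isomorphism because the right adjoint $\pi$ is full and faithful. Your explicit check of the orientation of the full-and-faithful criterion is a nice touch, but the content is identical.
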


\begin{proof}
  By the discussion following Definition~\ref{def:lawvere}, $I \dashv
  \pi$. Since $\pi$ is full and faithful, the counit of this
  adjunction is an isomorphism, and so $I \pi_P \cong P$ for all $P$
  in $\E$. We therefore have that
  \[P \cong I\pi_P = \Sigma_{\pi_p} K_1 \{P\} = \Sigma_{\mathit{Id}\,
    \pi_p} K_1 (\mathit{Id}\, \{P\}) = (I\, \mathit{Id}^\to\, \pi) P =
  \widehat{\mathit{Id}}\,P\]
i.e., that $\widehat{\mathit{Id}}\, P \cong P$ for all $P$ in
$\E$. Because these isomorphisms are clearly natural, we therefore
have that $\widehat{\mathit{Id}} \cong \mathit{Id}$.
\end{proof}

We now show that the lifting of a coproduct of functors is the
coproduct of the liftings.

\begin{lem}\label{lem:sums}
  Let $U:\E \ra \B$ be a Lawvere category and let $F$ and $G$ be
  functors on $\B$. Then $\widehat{F+G} \cong \hat{F} + \hat{G}$.
\end{lem}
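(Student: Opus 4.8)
The plan is to exploit the modular presentation $\hat{F} = I F^\to \pi$ established in Theorem~\ref{thm:tp}, together with the fact that $I$ is a left adjoint. The whole argument reduces to two observations: that the codomain lifting $(-)^\to$ turns a coproduct of functors into a coproduct in the arrow category, and that $I$ preserves those coproducts.

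First I would record that the coproduct $F+G$ presupposes binary coproducts in $\B$, so that $(F+G)X = FX + GX$ and, for any morphism $h:X\to Y$, $(F+G)h = Fh + Gh$. Since the arrow category $\B^\to$ is the functor category $[\mathbf{2},\B]$, its coproducts are computed pointwise: the coproduct of objects $f:A\to B$ and $g:C\to D$ is $f+g:A+C\to B+D$. Hence for any object $h$ of $\B^\to$ we have $(F+G)^\to h = (F+G)h = Fh + Gh = F^\to h + G^\to h$, the coproduct being taken in $\B^\to$. This exhibits $(F+G)^\to$ as the coproduct $F^\to + G^\to$ of endofunctors of $\B^\to$.

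Next I would invoke the adjunction $I \dashv \pi$ established after Definition~\ref{def:lawvere}. Being a left adjoint, $I$ preserves all colimits that exist, and in particular binary coproducts: for objects $A,B$ of $\B^\to$ we have $I(A+B)\cong IA + IB$, which simultaneously shows the coproduct on the right exists in $\E$ and computes it. Applying this with $A = F^\to\pi P$ and $B = G^\to\pi P$ for each object $P$ of $\E$ gives
\[
\widehat{F+G}\,P = I\big((F+G)^\to \pi P\big) = I\big(F^\to\pi P + G^\to\pi P\big) \cong I F^\to\pi P + I G^\to\pi P = (\hat{F}+\hat{G})\,P.
\]

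Finally I would check that this family of isomorphisms is natural in $P$, so that it assembles into an isomorphism of functors $\widehat{F+G}\cong \hat{F}+\hat{G}$. Naturality is automatic, since the isomorphism $(F+G)^\to\pi P \cong F^\to\pi P + G^\to\pi P$ is the canonical one induced by the coproduct structure of $\B$, and $I(A+B)\cong IA + IB$ is the canonical comparison map for a coproduct-preserving functor; both are natural by construction. The only point requiring care is bookkeeping across the three categories: the coproduct of functors lives in $[\B,\B]$ and presupposes coproducts in $\B$; the identity $(F+G)^\to = F^\to + G^\to$ lives in $[\B^\to,\B^\to]$ and relies on coproducts in $\B^\to$ being pointwise; and the final coproduct lives in $\E$, where its existence is guaranteed precisely because the left adjoint $I$ preserves it. No genuine obstruction arises once these are kept straight.
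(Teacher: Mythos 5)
Your proposal is correct and follows essentially the same route as the paper's proof: unfold $\widehat{F+G}\,P = I((F+G)^\to\pi_P)$, observe $(F+G)^\to\pi_P = F^\to\pi_P + G^\to\pi_P$, and conclude via the fact that the left adjoint $I$ preserves coproducts. Your added remarks on pointwise coproducts in $\B^\to$ and on naturality are sound elaborations of details the paper leaves implicit.
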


\begin{proof}
We have
\[ (\widehat{F+G}) P = I ((F+G)^\ra \pi_P) = I (F^\ra \pi_P + G^\ra
\pi_P) \cong I (F^\ra \pi_P) + I(G^\ra \pi_P) = \hat{F}P + \hat{G} P\]
The third isomorphism holds because $I$ is a left adjoint and so
preserves coproducts.
\end{proof}

\noindent
Note that the statement of Lemma~\ref{lem:sums} does not assert the
existence of either of the two coproducts mentioned, but rather that,
whenever both do exist, they must be equal. Note also that the lemma
generalises to any colimit of functors. Unfortunately, no result
analogous to Lemma~\ref{lem:sums} can yet be stated for products.

Our final result considers whether or not there is anything
fundamentally special about the lifting we have constructed. It is
clearly the ``right'' lifting in some sense because it gives the
expected induction rules. But other truth-preserving liftings might
also exist and, if this is the case, then we might hope our lifting
satisfies some universal property. In fact, under a further condition,
which is also satisfied by all of the liftings of Hermida and Jacobs,
and which we therefore regard as reasonable, we can show that our
lifting is the only truth-preserving lifting. Our proof uses a line of
reasoning which appears in Remark~2.13 in~\cite{hj98}.

\begin{lem}\label{lem:solo}
  Let $U:\E \ra \B$ be a full Lawvere category and let $\Box F$ be a
  truth-preserving lifting of a functor $F$ on $\B$. If $\Box F$
  preserves $\Sigma$-types --- i.e., if $(\Box F)(\Sigma_f P) \cong
  \Sigma_{F f} (\Box F) P$ --- then $\Box F \cong \hat{F}$.
\end{lem}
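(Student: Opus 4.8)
The plan is to use that in a full Lawvere category every object of $\E$ is, up to isomorphism, of the form $I\pi_P$, and then to transport $\Box F$ across the explicit formula $\hat F = IF^\to\pi$ using only its two preservation hypotheses. First I would recall, exactly as in the proof of the lemma establishing $\widehat{\mathit{Id}} \cong \mathit{Id}$, that since $\pi$ is full and faithful the counit of $I \dashv \pi$ is an isomorphism, so that $\mathit{Id} \cong I\pi$ as functors on $\E$. Whiskering this natural isomorphism by $\Box F$ on the left gives a natural isomorphism $\Box F \cong \Box F\, I\, \pi$. It therefore suffices to produce a natural isomorphism $\Box F\, I \cong I F^\to$ of functors $\B^\to \to \E$, since whiskering that one by $\pi$ on the right yields $\Box F\, I\, \pi \cong I F^\to \pi = \hat F$, and composing the two isomorphisms delivers $\Box F \cong \hat F$.

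To build $\Box F\, I \cong I F^\to$ I would compute both sides on an object $f : X \to Y$ of $\B^\to$. By the definition of $I$ we have $If = \Sigma_f (K_1 X)$, so applying the hypothesis that $\Box F$ preserves $\Sigma$-types (at $f$, with predicate $K_1 X$) and then the truth-preservation hypothesis $\Box F(K_1 X) \cong K_1(FX)$ yields
\[ (\Box F\, I) f \;=\; \Box F\bigl(\Sigma_f (K_1 X)\bigr) \;\cong\; \Sigma_{Ff}\bigl(\Box F (K_1 X)\bigr) \;\cong\; \Sigma_{Ff}\bigl(K_1 (FX)\bigr). \]
On the other hand $F^\to f = Ff : FX \to FY$, whose domain is $FX$, so by the definition of $I$ again $(I F^\to) f = I(Ff) = \Sigma_{Ff}(K_1(FX))$. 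Thus the two functors agree on objects, and specialising to $f = \pi_P$ recovers the pointwise identity $\Box F\, P \cong \Sigma_{F\pi_P}(K_1(F\{P\})) = \hat F P$.

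The main obstacle — and essentially the only point requiring care — is naturality: I must check that the two displayed isomorphisms are natural in $f$, so that they assemble into a natural isomorphism of functors $\B^\to \to \E$ rather than a mere bijection on objects. This is where the hypotheses must be read as \emph{natural} isomorphisms, namely the $\Sigma$-preservation isomorphism natural in both $f$ and the predicate argument, and the truth-preservation isomorphism natural in $X$. Such a reading is satisfied by all the liftings of Hermida and Jacobs, exactly the further condition flagged before the statement, so I would make this hypothesis explicit. Granting it, each arrow in the chain is a component of a natural transformation of functors on $\B^\to$, hence the composite $\Box F\, I \cong I F^\to$ is natural; feeding it through the two whiskering reductions of the first paragraph then produces the desired natural isomorphism $\Box F \cong \hat F$, completing the proof.
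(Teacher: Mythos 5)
Your proof is correct and follows essentially the same route as the paper's: both use fullness (via $\widehat{\mathit{Id}}\cong\mathit{Id}$, equivalently the invertibility of the counit of $I\dashv\pi$) to rewrite $P$ as $\Sigma_{\pi_P}K_1\{P\}$, then apply $\Sigma$-preservation followed by truth-preservation to arrive at $\Sigma_{F\pi_P}K_1 F\{P\}=\hat{F}P$. Your extra attention to the naturality of the component isomorphisms is a welcome refinement of a point the paper leaves implicit, but the underlying argument is identical.
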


\proof
  We have
\begin{eqnarray*}
(\Box F) P & \cong & (\Box F) (\widehat{\mathit{Id}} P) \\
           & \cong & (\Box F) (\Sigma_{\pi_P} K_1 \{P\})\\
           & \cong & \Sigma_{F \pi_P} (\Box F) K_1 \{P\} \\
           & \cong & \Sigma_{F \pi_P} K_1 F \{P\} \\
           & =     & \hat{F} P\rlap{\hbox to 212 pt{\hfill\qEd}}
\end{eqnarray*}

\noindent Finally, we can return to the question of the relationship between the
liftings of polynomial functors given by Hermida and Jacobs and the
liftings derived by our methods. We have seen that for constant
functors, the identity functor, and coproducts of functors our
constructions agree. Moreover, since Hermida and Jacobs' liftings all
preserve $\Sigma$-types, Lemma~\ref{lem:solo} guarantees that in a
full Lawvere category their lifting for products also coincides with
ours.

\section{Conclusion and future work}\label{sec:conclusion}

We have given a sound induction rule that can be used to prove
properties of data structures of inductive types. Like Hermida and
Jacobs, we give a fibrational account of induction, but we derive,
under slightly different assumptions on fibrations, a generic
induction rule that can be instantiated to {\em any} inductive type
rather than just to polynomial ones. This rule is based on initial
algebra semantics of data types, and is parameterised over both the
data types and the properties involved. It is also principled,
expressive, and correct. Our derivation yields the same induction
rules as Hermida and Jacobs' when specialised to polynomial functors
in the families fibration over $\Set$ and in other fibrations, but it
also gives induction rules for non-polynomial data types such as rose
trees, and for data types such as finite hereditary sets and
hyperfunctions, for which no fibrational induction rules have
previously been known to exist.

There are several directions for future work. The most immediate is to
instantiate our theory to give induction rules for more sophisticated
data types, such as nested types. These are exemplified by the data
type of perfect trees given in Haskell-like syntax as follows:
\[\begin{array}{l}
\mathit{data \; PTree}\; a : \Set \;\mathit{where}\\
\;\;\; \mathit{PLeaf} : a \ra \mathit{PTree}\; a\\
\;\;\; \mathit{PNode} : \mathit{PTree}\; (a,a) \ra \mathit{PTree}\; a
\end{array}\]
Nested types arise as least fixed points of rank-2 functors; for
example, the type of perfect trees is $\mu H$ for the 
functor $H$ given by $H F = \lambda X.\ X + F (X \times X)$.  An
appropriate fibration for induction rules for nested types thus
takes $\B$ to be the category of functors on $\Set$, $\E$ to
be the category of functors from $\Set$ to ${\mathcal{P}}$, and $U$ to be
postcomposition with the forgetful functor from
Section~\ref{sec:essence}.  A lifting $\hat{H}$ of $H$ is given
by $\hat{H}\, P\, X\, (\mathit{inl}\, a) = 1$ and $\hat{H}\, P\,X\,
(\mathit{inr} \, n) = P \,(X \times X)\, n$.  Taking the premise to be
an $\hat{H}$-algebra gives the following induction rule for perfect
trees:
\begin{align*}
indPTree : \;
&\forall\; (P : \Set \rightarrow {\mathcal{P}}).\\
& (U P = \mathit{PTree})
\rightarrow (\forall (X : \Set) (x : X).\ P \,(\mathit{PLeaf} \; x))
\rightarrow\\ 
&(\forall (X : \Set) (t : \mathit{PTree} \,(X \times X).\, P\, (X
\times X)\, t \rightarrow P \,(\mathit{PNode} \, t))) \rightarrow\\
&\qquad\forall (X : \Set) (t : \mathit{PTree}\, X).\, P\, X\, t
\end{align*}
This rule can be used to show, for example, that $PTree$ is a functor.

Extending the above instantiation for the codomain fibration to
so-called ``truly nested types''~\cite{mat09} and fibrations is
current work. We expect to be able to instantiate our theory for truly
nested types, GADTs, indexed containers, dependent types, and
inductive recursive types, but initial investigations show care is
needed. We must ascertain which fibrations can model predicates on
such types, since the codomain fibration may not give useful induction
rules, as well as how to translate the rules to which these fibrations
give rise to an intensional setting.

Matthes~\cite{mat09} gives induction rules for nested types (including
truly nested ones) in an intensional type theory. He handles only
rank-2 functors that underlie nested types (while we handle any
functor of any rank with an initial algebra), but his insights may
help guide choices of fibrations for truly nested types. These may in
turn inform choices for GADTs, indexed containers, and dependent
types. 

Induction rules can automatically be generated in many type
theories. Within the Calculus of Constructions~\cite{ch88} an
induction rule for a data type can be generated solely from the
inductive structure of that type. Such generation is also a key idea
in the Coq proof assistant~\cite{coq}. As far as we know, generation
can currently be done only for syntactic classes of functors rather
than for all inductive functors with initial algebras.  In some type
theories induction schemes are added as axioms rather than
generated. For example, attempts to generate induction schemes based
on Church encodings in the Calculus of Constructions proved
unsuccessful and so initiality was added to the system, thus giving
the Calculus of Inductive Constructions. Whereas Matthes' work is
based on concepts such as impredicativity and induction recursion
rather than initial algebras, ours reduces induction to initiality,
and may therefore help lay the groundwork for extending
implementations of induction to more sophisticated data types.

\section*{Acknowledgement}

\noindent
We thank Robert Atkey, Pierre-Evariste Dagand, Peter Hancock, and
Conor McBride for many fruitful discussions.

\end{document}